\newtheorem{definition}{Definition}
\newtheorem{assumption}{Assumption}
\newtheorem{theorem}{Theorem}
\newtheorem{corollary}{Corollary}
\newtheorem{lemma}{Lemma}
\newtheorem{example}{Example}
\DeclarePairedDelimiter\set{\{}{\}}%
\DeclarePairedDelimiter\abs{\lvert}{\rvert}%
\newcommand{\C}{\mathcal{C}}
\newcommand{\A}{\mathcal{A}}
\newcommand{\G}{\mathcal{G}}
\title{\LARGE \bf
Towards Assume-Guarantee Profiles for Autonomous Vehicles
}
\author{Tung Phan-Minh$^*$, Karena X. Cai$^*$, Richard M. Murray
\thanks{$^*$ Equal contribution}
\thanks{Tung Phan-Minh is a graduate student in Mechanical Engineering at the California Institute of Technology
       {\tt\small tmphan at caltech dot edu}}%
\thanks{Karena Cai is a graduate student in Control and Dynamical Systems at the California Institute of Technology
       {\tt\small kcai at caltech dot edu}}
\thanks{We would like to acknowledge DENSO and VeHiCal for funding this project and Soon-Jo Chung and Andrea Censi for the helpful discussions and feedback.}
}
\begin{document}
\maketitle
\thispagestyle{empty}
\pagestyle{empty}

\begin{abstract}
Rules or specifications for autonomous vehicles are currently formulated on a case-by-case basis, and put together in a rather ad-hoc fashion. As a step towards eliminating this practice, we propose a systematic procedure for generating a set of supervisory specifications for self-driving cars that are 1) associated with a distributed assume-guarantee structure and 2) characterizable by the notion of consistency and completeness. Besides helping autonomous vehicles make better decisions on the road, the assume-guarantee contract structure also helps address the notion of blame when undesirable events occur. We give several game-theoretic examples to demonstrate applicability of our framework.
\end{abstract}

\section{Introduction}
In the near future, autonomous vehicles will likely have to function alongside human-operated vehicles, pedestrians, cyclists, and more---that is, until a fully-automated transportation infrastructure can be built. The interaction between self-driving cars and humans will inevitably result in accidents.

Self-driving car manufacturers are therefore responsible for designing the high-level behavior of cars such that they minimize the risk of collision. Currently, however, the rules that self-driving cars follow are often designed heuristically, and are therefore lacking in transparency, predictability, and performance \cite{Survey2016, 2017FormalModel}.

We argue that if all self-driving cars were to adhere to some behavioral contract, there would be much greater certainty of how other self-driving cars would behave, thereby making it significantly easier to choose actions that would be mutually beneficial and also reduce the risk of accidents. The process of identifying the party responsible for an accident would also be simplified. Furthermore, car manufacturers could even begin to optimize their car behavior to accommodate driving preferences in addition to safety, like courtesy or fuel efficiency. 


To describe this contract, formal methods appear to be a good place to start. These provide many tools and formalisms for tackling specifications that guarantee high-level behaviors like safety and liveness in complex systems like self-driving cars \cite{modelChecking}. Temporal logic specifications oftentimes, however, rely on impractical assumptions on the environment to reduce the search space \cite{correctCont2011}. Additionally, the set of specifications for self-driving cars are often scenario-specific and are formulated independently of one another \cite{2017FormalModel}.

A more ontological approach to handling specifications involves designing ``rulebooks'' that specify the high-level behaviors of self-driving cars \cite{2019Rulebooks}. The rulebooks hierarchically order the set of rules for self-driving cars via a preorder. The preorder intentionally leaves ambiguity in how the car will choose to follow the rules, and therefore does not admit well-defined car behavior. The authors in \cite{2019Rulebooks} cannot make any guarantees about the correctness of car behavior, because they do not make any explicit assumptions about how agents in the environment behave. In particular, they do not address how to accommodate for the unpredictable and law-evasive nature of human drivers. 

In light of these issues, we propose a framework that can be used to:
\begin{enumerate}
\item Identify high-level specifications and their relations to one another as part of a hierarchical structure that helps self-driving cars achieve desirable behavior on the road.
\item Define what it means for a set of specifications to be consistent and complete (drawing inspiration from formal methods). 
\item Introduce an assume-guarantee contract formalism for specification structures as well as notions of rationality and blame.
\item Present a basic and consistent set of axioms for self-driving cars that can be refined and built upon.
\item Demonstrate with game-theoretic examples how rational autonomous vehicle behaviors can be computed/agreed upon under the assumption they are aware of each other's specification structures.
\end{enumerate}

We ultimately want to be able to guarantee that self-driving cars will behave correctly and not be responsible for accidents. This paper offers a step in that direction.

\section{Overview}
In a dynamic and interactive environment, the problem of providing guarantees for a single agent without making any assumption on the behaviors of other agents is ill-posed. We show the inherent coupling between the assumptions on the environment and the system's guarantees in Fig.~\ref{highlevel}. We propose the framework of assume-guarantee profiles to explicitly address this issue.
\begin{definition}[Assume-guarantee profile]
\label{ag_profile}
An \textit{assume-guarantee profile} for an agent is a 2-tuple  $(\mathcal{A}, \mathcal{G})$ where
\begin{itemize}
    \item $\mathcal{A}$ is a set of behavioral preferences or characteristics that the agent assumes its environment to have. 
    \item $\mathcal{G}$ is a set of behavioral preferences or characteristics that it is obligated to behave according to as long as its environment makes decisions in accordance with $\mathcal{A}$.
\end{itemize}
\end{definition}
   \begin{figure}
      \centering
      \includegraphics[scale=0.33]{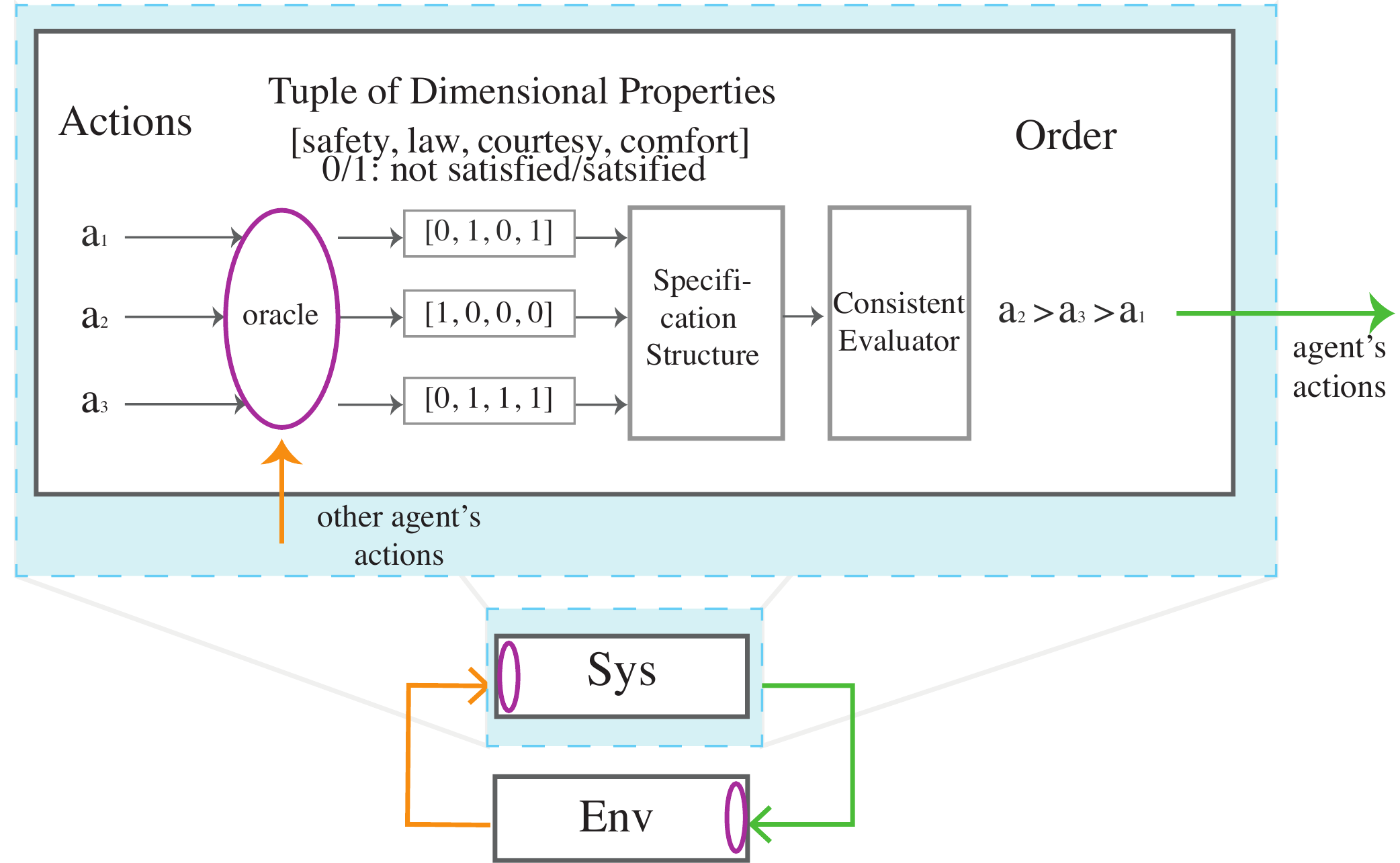}
      \caption{A high-level system architecture capturing the inherent coupling of the behavioral specifications for an agent and its environment is shown in the bottom figure. Each agent identifies the best action to take based on which subset of rules are satisfied by that action. The specification structure and conisistent evaluating function define a unique ordering on all subsets of rules that ultimately determine which actions are better or worse than others. }
      \label{highlevel}
   \end{figure}

To model the sets of behavioral preferences or characteristics mentioned in Definition~\ref{ag_profile},
we propose a mathematical object termed a \textit{specification structure} that imposes a hierarchy on sets of what we call \textit{dimensional properties}. A property is a desirable attribute that must be satisfied or not satisfied. A dimensional property is a property whose satisfaction is independent of the satisfaction (but not dissatisfaction!) of other properties. Examples of dimensional properties include safety, lawfulness, courtesy, and comfort. Safety does not imply comfort but being unsafe can imply discomfort. The guarantee we make in our assume-guarantee contract is that the self-driving car will act in accordance with the specification structure. Intuitively, this means of all subsets of specifications that the car can satisfy, it will choose to satisfy the one that is ranked highest in priority with respect to the specification structure. We define this more rigorously in the next section.

\section{Evaluator and Evaluated Structure}
Before presenting the formal definition of a specification structure, we make the following assumption about the predictive capabilities of a self-driving car.

\begin{assumption}[Oracle]
\label{oracle}
We assume that each autonomous agent relies on an \textit{oracle} \cite{sipser2012introduction} that provides predictions to its queries about the satisfaction of dimensional properties of interest for any action or strategy that it is considering. The input to the oracle is a set of dimensional properties, a potential action or strategy the car can choose to take, and the current world state configuration. The output of the oracle is some prediction of what specifications (dimensional properties) will be satisfied if a strategy is followed. In the most simple case, the oracle could return a valuation of a set of a Boolean variables, each indicating whether or not a property is violated. 
\end{assumption}
Although many decision/optimization problems currently posed for autonomous vehicles are of high computational complexities, not to mention undecidable \cite{madani2003undecidability, papadimitriou1987complexity}, we expect future technology to be capable of approximating the oracle to an acceptable level of fidelity (see~\cite{lefevre2014survey} for a sample of related methods).


If a set of dimensional properties is simply partially ordered, then there may not be not enough structure to uniquely identify which action should be taken.


\begin{example}
Consider a set $S = \{a, b, c, d, e\}$ that is partially ordered (a poset) such that $b \prec a$, $c \prec a$, $d \prec c$, and $e \prec c$. Here, each element in the set represents a dimensional property like safety, the law, performance, etc. By this partial order, the node $b$ cannot be compared to $c$ or $d$ or $e$. For a self-driving car, any action will result in satisfying a subset of the dimensional properties. Since $b$ cannot be compared to $c$ or $d$ or $e$, it is ambiguous whether a self-driving car should take an action that satisfies the properties $a$, $b$, and $d$ or an action that satisfies $a$, $b$, and $e$.
\end{example}
\medskip\noindent 
With only a partial ordering on the set of dimensional properties, there can be ambiguity in choosing the best subset of properties that can be satisfied. Our aim in this paper is to define a function evaluator over a set of dimensional properties such that it resolves all ambiguity and admits only a unique way to interpret the ordering of subsets relative to one another. Furthermore, if there exists a partial order on the set of dimensional properties, the function evaluator must respect that partial order. 
We therefore introduce the idea of \textit{consistent evaluators}, which are a class of functions that can endow some posets (in our case, dimesional properties) with a unique \textit{weak} order on their powersets.  Being weakly ordered means that all subsets are comparable but some subsets may have equal values to each other (these are considered indistinguishable). 

In a practical setting, if a self-driving car manufacturer wanted to impose a total order instead of a weak order on the powerset, they would have to face the challenging task of defining how any one set of dimensional properties is strictly better or worse than another set of dimensional properties. This is arguably not only impractical because of the exponential growth in the size of the powerset, but also because sometimes a \textit{strict} comparison among sets of properties is simply unnecessary. A consistent evaluator, which allows for sets in the powerset to have equal value, therefore allows for a more sensible way of resolving comparisons between subsets of specfications.
 

We refer to maximal chains (antichains) of partially-ordered sets in our definitions and proofs, so we present the definitions here. 
\begin{definition}[Maximal Chain (Antichain)]
A chain (antichain) is a subset of a partially ordered set such that any two distinct elements in the subset are comparable (incomparable). A chain (antichain) is maximal when it is not a proper subset of another chain (antichain). 
\end{definition}

\begin{definition}[Consistent evaluator]
\label{evaluator}
Given a set of dimensional properties $\mathcal{P}$ and its powerset $2^{\mathcal{P}}$, we say that $f: 2^{\mathcal{P}} \to T$, where $T$ is a totally ordered set with $\leq$ as the ordering relation, is a consistent evaluator for $\mathcal{P}$ if for all subsets $P_1,P_2 \subseteq \mathcal{P}$, the following must hold: 
\begin{enumerate}
    \item\label{worstbest} $P_1 \neq \varnothing \Rightarrow f(\varnothing) < f(P_1)$
        \item\label{evalremove} $p_1 \in P_1 \land p_2\in P_2 \land f(\set{p_1})=f(\set{p_2}) \Rightarrow (f(P_1) \leq f(P_2) \Rightarrow f(P_1-\set{p_1}) \leq f(P_2-\set{p_2}))$ 
    \item\label{evaldiscrim} $(\forall p_1 \in P_1. \forall p_2 \in P_2. f(\set{p_1}) \neq f(\set{p_2})) \Rightarrow (  \max\limits_{p \in P_1} f(\set{p}) < \max\limits_{q \in P_2} f(\set{q}) \Rightarrow f(P_1) < f(P_2))$
\end{enumerate}
If $\mathcal{P}$ is partially ordered by $\preceq$ and $\mathfrak{A}_{(P,\preceq)}$ is the set of all antichains of $P$, we further require that for any $p_1, p_2 \in P$
\begin{enumerate}
  \setcounter{enumi}{3}
      \item\label{evalorder} $p_1 \prec p_2 \Rightarrow f(\set{p_1}) < f(\set{p_2})$
      \item\label{evalcaution} $(\set{p_1, p_2} \in \mathfrak{A}_{(P, \preceq)} \land f(\set{p_1}) < f(\set{p_2})) \Rightarrow \exists s,t \in P. p_1 \prec s \land f(\set{s}) = f(\set{p_2}) \land  f(\set{p_1}) = f(\set{t}) \land t \prec p_2$
\end{enumerate}
\end{definition}
Intuitively, the conditions in Definition~\ref{evaluator} mean
\begin{enumerate}
    \item  The evaluator will assign the worst value when no property is satisfied. This ensures that every property included in $\mathcal{P}$ matters to the evaluator.
    \item Properties of equal value to the evaluator can be disregarded without affecting the result of the evaluation.
    \item For sets that do not have properties with the same values, the one with the most highly valued property is preferable.
    \item If there exists a pre-imposed hierarchy between some of the properties, then the evaluator must respect it.
    \item Given a pre-imposed hierarchy on the properties, the evaluator must be impartial: it will only assign different values to two properties whose relationship is not defined in the hierarchy when they are comparable via two equally valued ``proxies''. 
\end{enumerate}
\begin{example}
Consider a partially ordered set $Q$ in which $p$ is the greatest element and all other elements belong to an antichain. Then we can define $f$ as the function $f(\tilde{Q}) \coloneqq \mathds{1}_{p \in \tilde{Q}} \abs{Q} + \abs{\tilde{Q}-\set{p}}$ for all $\tilde{Q} \subseteq Q$ where $\mathds{1}$ is the indicator function. This function evaluates any subset with the maximal element in it as the cardinality of $Q$ plus the dimension of the subset not including the element $p$. It also evaluates any subset without the maximal element as the dimension of that subset. One can easily verify that $f$ is a consistent evaluator for $Q$.
\end{example}
   \begin{figure}[H]
      \centering
      \includegraphics[scale=0.25]{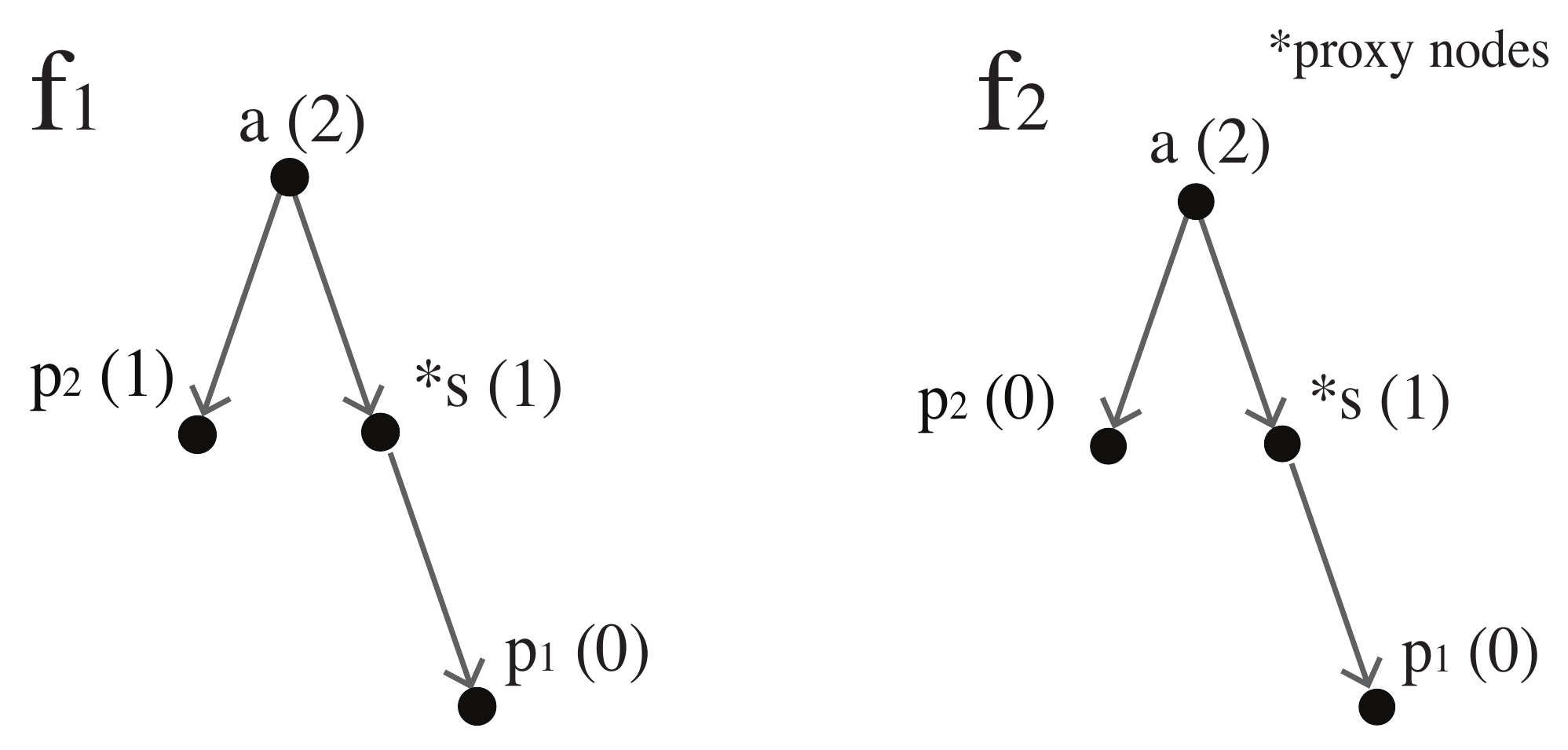}
      \caption{A poset that does not admit a consistent evaluator. The values in parentheses denote the value of the singleton set containing that node given by the evaluator $f_i$. In both cases, requirement~\ref{evalcaution} is violated.}
      \label{inconsistent}
   \end{figure}
\begin{example}[Poset without consistent evaluator]
Consider the poset with the structure shown in 
Fig.~\ref{inconsistent}. We cannot define a consistent evaluator that satisfies all five requirements on this partially-ordered set. In order to satisfy requirement~\ref{evalorder}, such that the 
partial order established in the poset is preserved, the 
consistent evaluator function $f_1$ on the left and $f_2$ on 
the right can, WLOG, assign all nodes on the right branch of the poset with the values shown in Fig.~\ref{inconsistent}. To respect the partial order, by requirements~\ref{evalorder} and \ref{evalcaution} of consistent evaluators, $p_2$ must be assigned a value in $\set{0,1}$. The left figure shows what happens if the function takes on the value $1$ for the left node, i.e., $f_1(\{p_2\}) = 1$. If this happens, then $f_1(\{p_1\}) < f_1(\{p_2\})$ but there is no proxy node that is comparable to $p_2$ in the poset and has a value equal to $f_1(\{p_1\})$. This clearly violates requirement~\ref{evalcaution}.
The right figure shows what happens if the function takes on the
value $0$, i.e. $f_2(\{p_1\}) = 0$. A similar violation is incurred by $f_2$ (see the Appendix).
\end{example}
\medskip\noindent 

Through the previous examples, we see that not all posets admit a consistent evaluator. The natural question to ask, is: what makes posets consistently evaluable? The next theorem will answer this question. 
\begin{theorem}
\label{consist_evaluable}
A finite poset $P$ of dimensional properties has a consistent evaluator if and only if it can be partitioned by a set $\A$ of $N$ maximal antichains such that
\begin{enumerate}
    \item\label{rank_prop} Maximal Antichain and Rank Criterion: The maximal antichains $\A$ can be assigned ranks in such a way that the partial order is respected. 
    \item\label{path_prop} The Maximal Chain Criterion: For each node (dimensional property), there exists a maximal chain containing the node of length $N$.
\end{enumerate}
\end{theorem}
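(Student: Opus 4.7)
The plan is to prove both directions of the biconditional, with requirement~\ref{evalcaution} of Definition~\ref{evaluator} doing the central structural work in each case.

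For the sufficient direction, I would construct a consistent evaluator directly from the partition. Given $\A=\{A_1,\dots,A_N\}$ indexed so that $p \in A_i$, $q \in A_j$, $p \prec q$ imply $i<j$, set $f(\{p\}) = i$ whenever $p\in A_i$, and extend to an arbitrary subset $S\subseteq P$ by forming the count vector $(c_N,\dots,c_1)$ with $c_i = |S\cap A_i|$, taking $f(S)$ to be the lexicographic rank of this vector read from the top coordinate down. Requirements~\ref{worstbest}--\ref{evaldiscrim} reduce to routine case analyses on lex comparison: the zero vector is strictly smallest; removing one element of equal rank from each side perturbs both count vectors at the same coordinate, so the top disagreeing coordinate (if any) either is unchanged or shifts identically, preserving any lex (in)equality; and when $P_1$ and $P_2$ draw from disjoint rank classes the top rank appearing in $P_2$ directly witnesses $f(P_1)<f(P_2)$. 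Requirement~\ref{evalorder} is immediate from the Rank Criterion. For requirement~\ref{evalcaution}, I would invoke the Maximal Chain Criterion: given an antichain $\{p_1,p_2\}$ with $f(\{p_1\})<f(\{p_2\})$, pick a maximal chain of length $N$ through $p_1$; since any chain meets any antichain in at most one element, this chain contains exactly one element of each $A_k$, and requirement~\ref{evalorder} forces these elements to appear in rank order along the chain, so its element in the antichain of $p_2$ serves as $s$. A chain through $p_2$ analogously produces $t$.

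For the necessary direction, I would take $A_i = \{p : f(\{p\})=v_i\}$ where $v_1<\dots<v_N$ are the distinct singleton values; these partition $P$ by construction, and each $A_i$ is an antichain because two comparable elements would have distinct singleton values by requirement~\ref{evalorder}. Maximality of $A_i$ is where requirement~\ref{evalcaution} earns its keep: if some $q \notin A_i$ were incomparable with every element of $A_i$, then for any $p \in A_i$ (WLOG $f(\{p\})<f(\{q\})$) the requirement produces some $t \prec q$ with $f(\{t\})=f(\{p\})$, placing $t\in A_i$ and contradicting incomparability. Assigning $A_i$ the rank $i$ satisfies the Rank Criterion by requirement~\ref{evalorder}.

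The main obstacle is the Maximal Chain Criterion in this latter direction. My plan is a one-step iterative extension: given $p \in A_i$, I claim there exists $p_{i+1}\in A_{i+1}$ with $p\prec p_{i+1}$. No element of $A_{i+1}$ can lie strictly below $p$ (that would flip their singleton values by requirement~\ref{evalorder}), so if no element of $A_{i+1}$ lay above $p$ either, every element of $A_{i+1}$ would be incomparable with $p$, and requirement~\ref{evalcaution} applied to any such antichain $\{p,q\}$ would produce the required above-element --- a contradiction. Iterating upward to $p_N$ and symmetrically downward to $p_1$ builds a chain with $p_i=p$ containing exactly one element of each antichain; it is maximal because any further element would share an antichain with some $p_k$ and thus be incomparable with it. The subtle point is to resist the temptation to pick witnesses independently in each antichain: the step-by-step extension is what ensures mutual comparability of the chosen elements via transitivity.
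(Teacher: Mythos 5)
Your proof is correct and takes essentially the same route as the paper's: for necessity you partition $P$ into the level sets of the singleton values and use requirement~\ref{evalcaution} both for maximality of these antichains and for the one-step chain extension, and for sufficiency you construct the same rank-counting, lexicographically ordered evaluator and verify requirement~\ref{evalcaution} via the Maximal Chain Criterion. The only differences are presentational (you argue the chain-extension and proxy-node steps directly rather than by contradiction, and you are more explicit about transitivity in building the chain), which if anything tightens the paper's sketchier passages.
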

\renewcommand{\proof}{Proof:}

We are ready to give a proof sketch for Theorem~\ref{consist_evaluable}.

\renewcommand{\proof}{Proof (sketch):}
\begin{proof}
($\Rightarrow$):
We associate all properties that have the same value with a unique antichain. This set of antichains can be shown to form the partition that has properties~\ref{rank_prop}) and ~\ref{path_prop}).  The reader is referred to the Appendix for a full proof.
\\
($\Leftarrow$): 
We prove this direction by construction. If a partially-ordered set is partitioned into ranked maximal anti-chains, we can construct a function where a subset is evaluated based on counting the number of nodes that are satisfied in every rank. Two subsets are comparable via this function by lexicographical ordering, where the number of nodes counted for higher ranking antichains are counted as more significant. We then show that this function satisfies all the properties defined for a consistent evaluating function. Readers can refer to \ref{W_func_spec} for a clarifying example and the Appendix for the full proof.  

\end{proof}
\renewcommand{\proof}{Proof:}
Is it possible that there may be multiple such decompositions of maximal antichains, making the ordering that is induced via the corresponding rankings is not unique and hence the consistent evaluator not very consistent? Luckily, the answer is a reassuring negative.
\begin{theorem}
\label{uniqueness}
Such a partition in Theorem~\ref{consist_evaluable} is unique.
\end{theorem}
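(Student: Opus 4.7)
The plan is to show that for any valid partition $\A = \{A_1, \ldots, A_N\}$ of $P$ into ranked maximal antichains, the index $i$ of the antichain containing a node $p$ is completely determined by an intrinsic invariant $d(p)$, namely the length of the longest chain in $P$ terminating at $p$. Granting this, $A_i$ must equal $\{p \in P : d(p) = i\}$ in every valid partition, so the partition is unique.

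First I would verify that the number $N$ of antichains is an invariant of $P$ alone. Any chain in $P$ meets each antichain in at most one element, so its length is at most $N$; and condition~\ref{path_prop} of Theorem~\ref{consist_evaluable} supplies a chain of length exactly $N$ through every node. Hence $N$ equals the length of the longest chain in $P$, independent of which valid partition was selected.

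Next, for $p \in A_i$ I would apply condition~\ref{path_prop} to obtain a maximal chain $q_1 \prec q_2 \prec \cdots \prec q_N$ through $p$. A pigeonhole argument (chains meet antichains in at most one point, and there are $N$ antichains covering the $N$ chain elements) forces the chain to contain exactly one element from each $A_j$, and the rank-respecting condition then forces $q_j \in A_j$, hence $p = q_i$. The prefix $q_1 \prec \cdots \prec q_i = p$ shows $d(p) \geq i$. Conversely, any chain ending at $p$ visits antichains in strictly increasing rank order (again using that chains and antichains share at most one element, combined with the rank ordering respecting $\preceq$), so its length is at most $i$, giving $d(p) \leq i$. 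Combining the two bounds yields $d(p) = i$.

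Since $d(p)$ is defined purely from $(P, \preceq)$, the rank of $p$ agrees across any two valid partitions, so $A_i = \{p \in P : d(p) = i\}$ in every case, and uniqueness follows. The main subtlety is the pigeonhole step that forces the chain supplied by condition~\ref{path_prop} to place exactly one element in each antichain; this is immediate once one notes that chain-antichain intersections are at most singletons, but it is the linchpin that ties the geometry of $P$ to the rank structure of the partition.
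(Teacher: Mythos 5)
Your proof is correct, and it reaches the conclusion by a somewhat different route than the paper. The paper fixes the particular partition $\{f^{-1}(z_i)\}$ induced by a consistent evaluator (from the forward direction of Theorem~\ref{consist_evaluable}), takes a second hypothetical partition $P_1,\ldots,P_m$, shows $m=n$ by pigeonhole, and then derives a contradiction for any misplaced node $p \in P_k \cap f^{-1}(z_h)$ with $h<k$ by splicing the two length-$n$ chains through $p$ guaranteed by condition~\ref{path_prop} into a single chain of length $k+n-h>n$. You instead characterize the partition intrinsically: you show $N$ must equal the longest chain length of $P$, and that any node $p$ placed in the $i$th antichain of a valid partition satisfies $d(p)=i$, where $d(p)$ is the length of the longest chain ending at $p$, so every valid partition coincides with the level sets of $d$. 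Both arguments run on the same two facts --- a chain meets each antichain in at most one element, and condition~\ref{path_prop} threads a full-length chain through every node --- but your version avoids any reference to the evaluator, produces the unique partition directly rather than by contradiction, and makes explicit the pigeonhole step (the full-length chain hits each antichain exactly once, in rank order) that the paper's chain-splicing argument uses implicitly when it asserts $p_k = f_h = p$.
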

\renewcommand{\proof}{Proof (sketch):}
\begin{proof}
This result can be obtained using the pigeonhole principle and property~\ref{path_prop} of Theorem~\ref{consist_evaluable}. A full argument is available in the Appendix.
\end{proof}
We have defined the necessary and sufficient properties posets need to have so they can be consistently evaluated. Since this set of posets is not super intuitive, we introduce graded posets, which we show are also consistently-evaluable but easier to reason about. 

   \begin{figure}[thpb]
      \centering
      \includegraphics[scale=0.28]{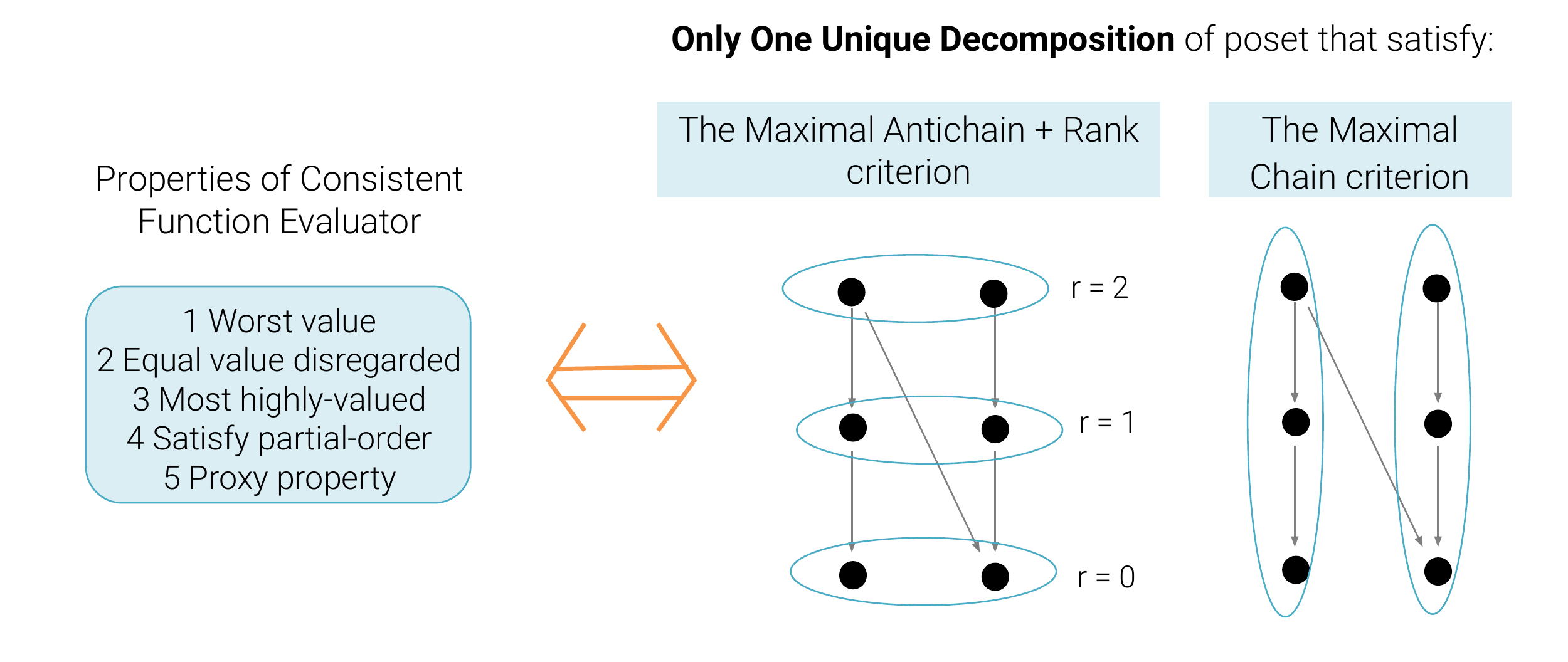}
      \caption{A visualization of Theroem~\ref{consist_evaluable}.}
      \label{proof_sketch}
   \end{figure}

\begin{definition}[Specification structure]
A specification structure is a finite, graded, partially ordered set of dimensional properties $\mathcal{P}$. Namely, if $\preceq$ is the ordering relation for $\mathcal{P}$ and $\prec$ is the strict version thereof satisfying $x \prec y \Leftrightarrow (x \preceq y \land x \neq y)$  then there exists a ranking function $\rho: \mathcal{P} \to \mathbb{N}$ such that
\begin{enumerate}
    \item $p_1 \prec p_2 \Rightarrow \rho(p_1) < \rho(p_2)$.
    \item $p_1 \lessdot p_2 \Rightarrow \rho(p_2) = \rho(p_1) + 1$.
    \item $p$ is a minimal element of $\mathcal{P} \Rightarrow \rho(p) = 0$.
\end{enumerate}
where $\lessdot$ denotes the \textit{covering relation} on $\mathcal{P}$ that satisfies 
\begin{align*}
p_1 \lessdot p_2 \Leftrightarrow p_1 \prec p_2 \land \forall p \in \mathcal{P}. \lnot(p_1 \prec p \land p \prec p_2)
\end{align*}
\end{definition}
We immediately have the following corollary.
\begin{corollary}
Any specification structure can be consistently evaluated.
\end{corollary}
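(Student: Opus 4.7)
The plan is to apply Theorem~\ref{consist_evaluable} by extracting the required maximal-antichain partition directly from the ranking function of the specification structure. Given $(\mathcal{P}, \preceq)$ with rank function $\rho : \mathcal{P} \to \mathbb{N}$, let $N = 1 + \max_{p \in \mathcal{P}} \rho(p)$ (finite because $\mathcal{P}$ is finite) and set $A_k := \{p \in \mathcal{P} : \rho(p) = k\}$ for $k = 0, 1, \ldots, N - 1$. These level sets partition $\mathcal{P}$ into $N$ blocks, and the claim is that this partition satisfies the hypotheses of Theorem~\ref{consist_evaluable}.

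First I would verify that each $A_k$ is a maximal antichain. It is an antichain because property~(1) of the rank function, $p_1 \prec p_2 \Rightarrow \rho(p_1) < \rho(p_2)$, forbids two comparable elements from sharing a rank. For maximality, any $p \notin A_k$ satisfies $\rho(p) \neq k$; by iterating the covering relation, each step of which changes the rank by exactly $1$ (property~(2)), one can descend or ascend from $p$ along covers to an element of rank exactly $k$, which lies in $A_k$ and is comparable to $p$. Hence $A_k$ cannot be strictly enlarged.

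Next I would check the two criteria of Theorem~\ref{consist_evaluable}. For the Maximal Antichain and Rank Criterion, assign $A_k$ the antichain-rank $k$; then $p \in A_i$, $q \in A_j$ with $p \prec q$ forces $i = \rho(p) < \rho(q) = j$ by property~(1), so the antichain ranks are compatible with $\preceq$. For the Maximal Chain Criterion, fix any $p \in A_k$ and construct a maximal chain of length $N$ through $p$ by concatenating a descending chain of covers from $p$ to a minimal element (rank $0$, reached in $k$ steps by properties~(2) and~(3)) with an ascending chain of covers from $p$ to a maximal element (rank $N - 1$, reached in $N - 1 - k$ steps).

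The main obstacle will be the upward construction from $p$ to a maximal element of rank exactly $N - 1$, which implicitly requires that every maximal element of $\mathcal{P}$ attain the global maximum rank—a ``purity'' property often built into the notion of a graded poset. Either this is part of the intended reading of the specification structure definition, or it must be enforced as a side condition by observing that Theorem~\ref{consist_evaluable}'s Maximal Chain Criterion would otherwise fail for any ``short'' maximal element. With purity granted, Theorem~\ref{consist_evaluable} applies directly and yields a consistent evaluator for $\mathcal{P}$, completing the corollary.
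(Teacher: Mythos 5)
Your route is the same as the paper's: the paper's proof is a one-line appeal to Theorem~\ref{consist_evaluable}, asserting that any graded poset has the two properties required there, and your level-set partition $A_k = \rho^{-1}(k)$ is exactly the witness that makes that appeal precise. The one point you leave hanging --- the ``purity'' needed to extend a chain of covers upward from $p$ to a maximal element of rank $N-1$ --- is a genuine issue if the definition of a specification structure is read literally (a two-element chain together with an isolated minimal point satisfies the three conditions on $\rho$, yet fails the Maximal Chain Criterion and is in fact not consistently evaluable), but within the paper it is supplied by Lemma~\ref{maxchain}, which treats ``graded'' as equivalent to all maximal chains having the same length $N$. Invoking that lemma also streamlines your verification: any node lies on some maximal chain, which then has length $N$, is saturated, begins at a minimal element of rank $0$, and hence meets every level $A_0,\ldots,A_{N-1}$; this yields the Maximal Chain Criterion immediately and simultaneously shows each $A_k$ is a maximal antichain, since any $p \notin A_k$ lies on such a chain and is therefore comparable to an element of rank $k$. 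With that substitution in place of your conditional purity assumption, your argument is complete and coincides with the paper's intended proof.
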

\begin{proof}
This follows directly from Theorem~\ref{consist_evaluable} and the fact that any graded poset has properties 1) and 2) defined therein.
\end{proof}
The relation between consistently-evaluable sets and partially-ordered sets are shown more clearly in Fig. \ref{graded_posets_justify}. The main difference between consistently-evaluable posets and graded posets is the constraint that graded posets are defined such that the ranks assigned to two nodes which are comparable have to have a difference of one. It is easier to check whether a set is graded, as opposed to consistently-evaluable because of the following lemma: 
\begin{lemma}
\label{maxchain}
A poset is graded if and only if all of its maximal chains have the same length.
\end{lemma}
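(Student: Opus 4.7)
The plan is to prove both directions using standard poset-theoretic arguments centered on a rank function.

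For the forward direction, suppose $P$ is graded with a ranking function $\rho$ satisfying the three axioms in the definition. Given any maximal chain $p_0 \lessdot p_1 \lessdot \cdots \lessdot p_\ell$, maximality forces $p_0$ to be a minimal element of $P$ (otherwise it could be extended downward), so $\rho(p_0) = 0$ by axiom~3. An easy induction using axiom~2 along the covering relations then gives $\rho(p_i) = i$, and in particular the length satisfies $\ell = \rho(p_\ell)$. Thus the length of any maximal chain is determined by the rank of its top element, so showing that all maximal chains have the same length reduces to showing $\rho$ takes a common value on all maximal elements of $P$ (a fact that in turn follows from axioms~1 and~2 together with the finiteness of $P$).

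For the backward direction, I would suppose every maximal chain has a common length $n$, and define $\rho(p)$ to be the length of any saturated chain (a chain of covering relations) from a minimal element of $P$ up to $p$. The crux is showing this is well-defined: given two saturated chains $C_1, C_2$ starting at minimal elements and ending at $p$, extend them upward by a \emph{common} saturated tail $D$ from $p$ to a maximal element. Both $C_1 \cup D$ and $C_2 \cup D$ are saturated chains from a minimal element to a maximal element and thus maximal chains, so both have length $n$; canceling the contribution of $D$ forces $|C_1| = |C_2|$.

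Once well-definedness is established, the three axioms follow with minimal effort. Axiom~3 is immediate since the singleton chain $\{p\}$ at a minimal $p$ has length $0$. Axiom~2 holds because appending $p_2$ to a saturated chain ending at $p_1$ yields a saturated chain ending at $p_2$, raising the length by exactly one. Axiom~1 follows from repeated applications of axiom~2 along any saturated chain from $p_1$ to $p_2$, which exists whenever $p_1 \prec p_2$ in a finite poset. I expect the well-definedness argument in the backward direction to be the main obstacle, since once it is in hand the remaining verifications are essentially bookkeeping; the forward direction is mostly a telescoping induction on the covering relations of a maximal chain.
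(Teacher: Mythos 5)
Your backward direction is correct and is the standard argument: defining $\rho(p)$ as the length of a saturated chain from a minimal element up to $p$, and establishing well-definedness by appending a common saturated tail from $p$ to a maximal element so that both extended chains are maximal chains of the common length $n$, is exactly the right idea, and the verification of the three axioms from there is indeed routine. For what it is worth, the paper states Lemma~\ref{maxchain} without any proof of its own, so there is no in-paper argument to compare against; note also that this backward direction is the only one the paper actually uses (Corollary~\ref{graphRefinement}).

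The forward direction, however, has a genuine gap: the parenthetical claim that $\rho$ takes a common value on all maximal elements of $P$ ``follows from axioms~1 and~2 together with the finiteness of $P$'' is false. Take $P=\set{a,b,c,d}$ with covering relations $a \lessdot b$, $a \lessdot c$, $c \lessdot d$, and set $\rho(a)=0$, $\rho(b)=\rho(c)=1$, $\rho(d)=2$: all three axioms in the paper's definition of gradedness are satisfied, yet the maximal elements $b$ and $d$ have ranks $1$ and $2$, and the maximal chains $\set{a,b}$ and $\set{a,c,d}$ have lengths $1$ and $2$. So with gradedness defined as in the paper, constancy of $\rho$ on maximal elements is simply not a consequence of the axioms, and no bookkeeping will close this step; your reduction $\ell=\rho(p_\ell)$ is fine, but the missing ingredient is unobtainable without extra hypotheses (e.g., requiring the rank-level sets to be maximal antichains as in condition~\ref{rank_prop}) of Theorem~\ref{consist_evaluable}, or restricting to posets whose maximal elements all sit at top rank). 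If you want a clean statement to prove, either add such a hypothesis or prove only the direction you established, which is the one the paper needs.
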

Any consistently evaluable poset can be reduced to a ``canonical'' form that has the graded property with the same consistent evaluation. 
\begin{theorem}
Each consistently evaluable poset can be turned into a graded poset that is equivalent under consistent evaluation.
\end{theorem}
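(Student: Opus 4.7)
The plan is to use Theorems~\ref{consist_evaluable} and \ref{uniqueness} to extract the unique partition of the consistently evaluable poset $P$ into ranked maximal antichains $\{A_1, \ldots, A_N\}$, derive a natural rank function from this partition, and then repair any offending cover relations by inserting intermediate elements so that the resulting poset becomes graded while the induced consistent evaluation on the original ground set is preserved.

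First I would define $\rho : P \to \{0, 1, \ldots, N-1\}$ by $\rho(p) = k-1$ whenever $p \in A_k$. Two of the three ranking axioms for a graded poset come for free: property~\ref{rank_prop} of Theorem~\ref{consist_evaluable} yields $p_1 \prec p_2 \Rightarrow \rho(p_1) < \rho(p_2)$, and the Maximal Chain Criterion forces every minimal element of $P$ into $A_1$, so $\rho$ vanishes on minimal elements. The only potential failure is the cover axiom $p_1 \lessdot p_2 \Rightarrow \rho(p_2) = \rho(p_1) + 1$, which can break when the rank gap exceeds one.

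To patch this, I would construct $P^*$ from $P$ by, for each cover $p_1 \lessdot p_2$ of rank gap $j - i > 1$, inserting fresh elements $x^{(i+1)}_{p_1,p_2}, \ldots, x^{(j-1)}_{p_1,p_2}$, placing $x^{(k)}_{p_1,p_2}$ into the extended antichain $A_{k+1}$, and declaring the only new covering relations to be the chain $p_1 \lessdot x^{(i+1)}_{p_1,p_2} \lessdot \cdots \lessdot x^{(j-1)}_{p_1,p_2} \lessdot p_2$ (so that the original cover $p_1 \lessdot p_2$ is no longer a cover in $P^*$). Every cover in $P^*$ now spans exactly one rank, so by Lemma~\ref{maxchain} together with the other two ranking axioms, $P^*$ is graded, hence consistently evaluable by the corollary following Theorem~\ref{uniqueness}. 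I would then extend a consistent evaluator $f$ of $P$ to $f^*$ on $P^*$ by assigning each filler $x^{(k)}_{p_1,p_2}$ the same value as a fixed reference element of the original antichain $A_{k+1}$; the fillers then serve as ``equal proxies'' in the sense of requirement~\ref{evalcaution}, and by requirement~\ref{evalremove} the ordering that $f^*$ induces on subsets of the original ground set $P$ coincides with the ordering induced by $f$, which is the sense in which the two posets are equivalent under consistent evaluation.

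The main obstacle is verifying that the augmented antichain decomposition of $P^*$ still satisfies both clauses of Theorem~\ref{consist_evaluable}: one must check that each extended antichain remains \emph{maximal} in $P^*$ (possibly after padding with additional proxy elements of equal value), that every element of $P^*$, filler or original, lies on a maximal chain of length $N$, and that the delicate proxy condition~\ref{evalcaution} is not disturbed by the new comparability relations introduced along the inserted chains. A careful case analysis on how fillers from distinct inserted chains interact, using the fact that antichain ranks respect the original partial order, should suffice to rule out any spurious new comparabilities between fillers at the same rank.
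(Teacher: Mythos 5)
Your construction can be made to work, but it takes a genuinely different route from the paper, and a heavier one. The paper keeps the ground set fixed and simply \emph{deletes} every cover relation that spans more than two levels of the unique maximal-antichain partition given by Theorems~\ref{consist_evaluable} and~\ref{uniqueness}; since such a cover has no intermediate witness, transitivity is unharmed, all surviving maximal chains meet every one of the $N$ antichains, so Lemma~\ref{maxchain} gives gradedness, and because the antichains themselves are untouched the evaluation is literally unchanged on the very same powerset. You instead \emph{subdivide} the long covers by inserting filler elements, which does produce a graded poset: your rank function satisfies the three axioms, each extended antichain stays maximal (a filler at another rank is comparable either to the member of its own inserted chain at that rank, or to an element at that rank on a length-$N$ chain through $p_1$ or $p_2$), and every filler lies on a length-$N$ chain obtained by splicing a chain through $p_1$, the inserted fillers, and a chain through $p_2$ --- so the verifications you defer do go through. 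The price is twofold. First, the ground set changes, so ``equivalent under consistent evaluation'' only holds after restricting to subsets of the original $P$; subsets containing fillers have no counterpart, whereas the paper's construction keeps the identical evaluator on the identical powerset (and avoids inventing fictitious dimensional properties). Second, your extension $f^*$ is under-specified (you fix only singleton values on fillers), and requirement~\ref{evalremove} alone does not deliver the claimed agreement; the clean way to finish your argument is to use the evaluator $W$ from the proof of Theorem~\ref{consist_evaluable} on the extended partition, note that on subsets of $P$ it counts no fillers and hence coincides with $W$ on the original partition, and then invoke Theorem~\ref{consist} to transfer this agreement to arbitrary consistent evaluators of $P$ and of your enlarged poset.
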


   \begin{figure}[thpb]
      \centering
      \includegraphics[scale=0.32]{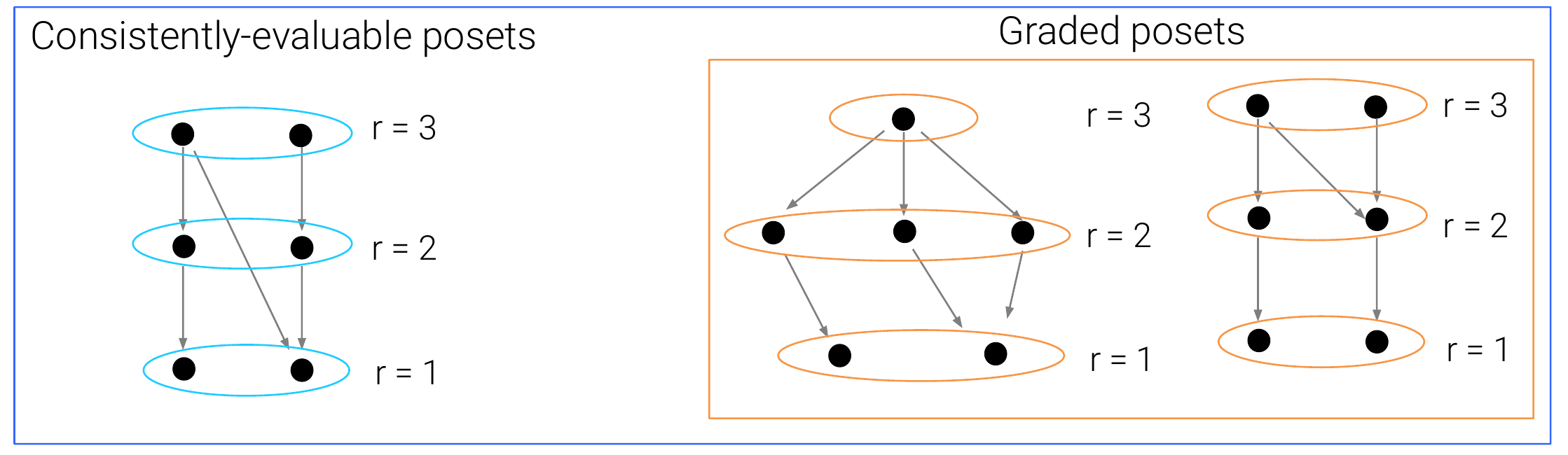}
      \caption{Graded posets (specification structures) are a subset of consistently-evaluable posets.}
      \label{graded_posets_justify}
   \end{figure}
   
\renewcommand{\proof}{Proof:}
\begin{proof}
This is achieved by removing all ``edges'' that span more than 2 levels of antichains in the unique partition of Theorem~\ref{consist_evaluable}. One can without much difficulty verify that doing so will remove all maximal chains with length strictly less than the total number of these antichains, which by Lemma~\ref{maxchain} implies that the resulting poset is graded. Since the other antichains are not affected by these operations, the resulting evaluation is not affected either.
\end{proof}
\renewcommand{\proof}{Proof:}

   \begin{figure}[thpb]
      \centering
      \includegraphics[scale=0.22]{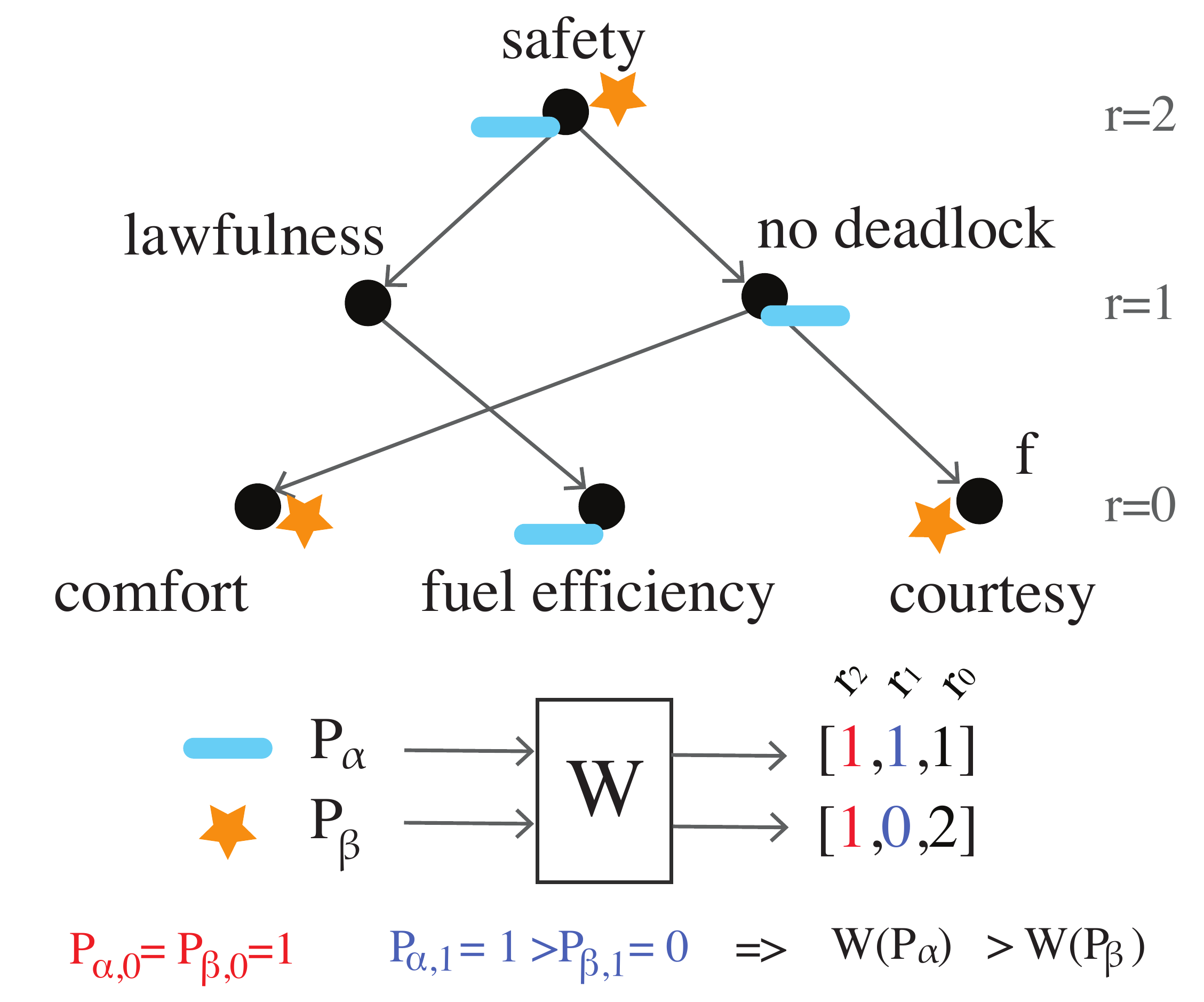}
      \caption{This shows how the consistent evaluator function $W$ works on a specification structure. The function W computes a tuple for each subset, and compares the elements from most significant to least significant digits (left to right).}
      \label{Wfunc}
   \end{figure}

\begin{example}[Evaluating a specification structure]
\label{W_func_spec}
Let $S$, $L$, $ND$, $FE$, $Cf$, $C$ be dimensional properties denoting safety, lawfulness, no deadlock, fuel efficiency, comfort, and courtesy respectively.
Let $P \coloneqq \set{S, L, ND,FE,Cf,C}$. The partial order on these dimensional properties is shown in Fig.~\ref{Wfunc}. Given the current world configuration, we assume the oracle can determine which subset of specifications will be satisfied by taking a given action. Let $P_{\alpha} \coloneqq \set{S, ND, FE}$  denote the subset of specifications satisfied by taking action $\alpha$. Similarly, let $P_{\beta} \coloneqq \set{S,Cf,C}$. To compare the actions $\alpha$ and $\beta$, given $P_{\alpha}$, $P_{\beta}$, we use the evaluator $W$ defined in the proof sketch of Theorem~\ref{consist_evaluable} to make the comparison. $W(P_{\alpha}) = [1, 1, 1]$ since there is one specification from each rank that can be satisfied by taking the action $\alpha$, and $W(P_{\beta}) = [1,0,2]$ since there is one property with highest rank (r=2) and two properties with lowest rank (r=0) that can be satisfied by taking action $\beta$. Therefore, to evaluate their relative importance, the most significant figure corresponds to the left-most element in the tuple since that element has the highest rank. We begin our comparison there. Note $W_i$ represents the $i$\/th the element of the tuple. Since $W_0(P_{\alpha}) = 1$ and $W_0(P_{\beta}) = 1$, we have to keep comparing elements in the tuple to determine which one has higher ordering. We find $W_1(P_{\alpha}) > W_1(P_{\beta})$. Therefore, $P_{\alpha}$ dominates $P_{\beta}$ by the weak order imposed by the $W$ evaluator and therefore, the action $\alpha$ should be chosen over $\beta$.
\end{example}
\medskip\noindent 
\section{Consistency and completeness}
We would like to introduce the ideas of consistency and completeness of a set of specifications that are hierarchically-ordered in a specification structure. In this context, consistency is defined as the ability to be uniquely and consistently evaluable. Completeness is defined by the extent of the dimensional properties specified. A specification structure that has more dimensional properties (i.e. encompasses a broader range of specifications) is therefore more complete. 
\subsection{Consistency} 
The notion of consistency comes from Theorem~\ref{consist}, which says that there is a unique weak order on the powerset of a specification structure regardless of the consistent evaluator used.
\begin{theorem}[Consistency implies uniqueness]
\label{consist}
If $\mathcal{P}$ is a poset with an ordering relation $\preceq$ that can be consistently evaluated, then all consistent evaluators of $\mathcal{P}$ are equivalent. That is, for any pair of consistent evaluators $f_a, f_b$ of $\mathcal{P}$, for all $P_1, P_2 \subseteq \mathcal{P}$, we have $$f_a(P_1) \leq f_a(P_2) \Leftrightarrow f_b(P_1) \leq f_b(P_2)$$
\end{theorem}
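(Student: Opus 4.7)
The plan is to reduce the question to the canonical partition guaranteed by Theorem~\ref{consist_evaluable} and Theorem~\ref{uniqueness}. Given any consistent evaluator $f$ on $\mathcal{P}$, define an equivalence relation on $\mathcal{P}$ by $p_1 \sim_f p_2 \Leftrightarrow f(\set{p_1}) = f(\set{p_2})$. My first step is to argue that the equivalence classes of $\sim_f$ coincide with the unique partition $\A = \{A_0, \dots, A_{N-1}\}$ of $\mathcal{P}$ into ranked maximal antichains. The forward direction of Theorem~\ref{consist_evaluable} already says that one such partition is built by grouping nodes of equal singleton value, and Theorem~\ref{uniqueness} says the partition is unique; hence both $f_a$ and $f_b$ yield the \emph{same} grouping of nodes by singleton value, and the ordering of these groups by $f(\set{\cdot})$ agrees with the rank ordering in $\A$ (using properties~\ref{evalorder} and~\ref{evalcaution} together with the maximal chain criterion to rule out alternative orderings).

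With singleton values pinned down up to the canonical partition, the second step is to lift this agreement from singletons to arbitrary subsets of $\mathcal{P}$. Given $P_1, P_2 \subseteq \mathcal{P}$, I would iteratively cancel matched pairs: pick $p_1 \in P_1$ and $p_2 \in P_2$ belonging to the same antichain $A_k$ of the canonical partition, so that $f(\set{p_1}) = f(\set{p_2})$; by property~\ref{evalremove}, the relation $f(P_1) \leq f(P_2)$ is preserved after removing these two elements. Repeating this until no further matches exist produces residual subsets $P_1', P_2'$ whose elements occupy disjoint antichains of $\A$. If both residuals are empty, then $f(P_1) = f(P_2)$. Otherwise, property~\ref{evaldiscrim} applies: the residual whose maximal-rank antichain is strictly higher must evaluate strictly larger under any consistent evaluator. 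The entire cancellation procedure, as well as the final comparison, depends only on the canonical partition $\A$ and not on which particular $f$ we chose.

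Consequently, for any $P_1, P_2 \subseteq \mathcal{P}$, the truth value of $f(P_1) \leq f(P_2)$ is determined by a purely combinatorial, $f$-independent procedure on the canonical partition, so $f_a(P_1) \leq f_a(P_2) \Leftrightarrow f_b(P_1) \leq f_b(P_2)$, which is what the theorem asserts. The main obstacle I expect is the first step: verifying rigorously that $\sim_f$ must coincide with the canonical partition. Property~\ref{evalcaution} only speaks about two-element antichains and demands the existence of equal-valued proxies along chains, so one must use the maximal chain criterion (length $N$ through every node) to conclude that within a single canonical antichain no two nodes can receive different singleton values without forcing a chain longer than $N$, and that across different canonical antichains the ranks must align with the ordering of singleton values. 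Once this normalization of singleton values is locked in, the rest of the argument---cancellation via property~\ref{evalremove} followed by discrimination via property~\ref{evaldiscrim}---is essentially mechanical.
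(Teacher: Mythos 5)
Your proposal is correct and follows essentially the same route as the paper's proof: both reduce to the unique maximal-antichain partition from Theorems~\ref{consist_evaluable} and~\ref{uniqueness}, observe that every consistent evaluator therefore assigns equal singleton values within each antichain and orders the antichains by rank, then cancel equal-valued pairs via requirement~\ref{evalremove} and settle the residual comparison via requirements~\ref{worstbest} and~\ref{evaldiscrim}. The step you flag as the main obstacle (identifying $\sim_f$ with the canonical partition and fixing the rank order) is exactly the point the paper asserts via requirement~\ref{evalorder} with little elaboration, so your treatment is, if anything, slightly more explicit.
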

\renewcommand{\proof}{Proof (sketch):}
\begin{proof}
This result follows from applying Theorems~\ref{consist_evaluable} and~\ref{uniqueness} and Definition~\ref{evaluator}. A full proof is available in the Appendix.
\end{proof}
\renewcommand{\proof}{Proof:}

\subsection{Completeness}
As a dual to how the placement of the \textit{edges} of the directed graph presenting a specification structure determines its consistency, we propose that the inclusion (or exclusion) of \textit{nodes} determines its (relative) completeness. 

In order to make an existing specification structure more complete, we must be able to refine the graph in a consistent manner. Refinement is equivalent to adding dimensional properties (nodes) or comparisons (edges) to the specification structure in a way that preserves the gradedness property of a specification structure. We now define how to properly add a node or edge into the specification structure in a way that preserves the specification structure's mathematical properties.

The following is a direct corollary of Lemma~\ref{maxchain}.
\begin{corollary}[Proper mode or edge refinement]
\label{graphRefinement}
If a node (or an edge) is added to the specification structure such that its relationship to the other nodes (the comparison it makes) is defined in a way that all maximal chains 
have the same length, then the resulting partially ordered set is also a specification 
structure. 
\end{corollary}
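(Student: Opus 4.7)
The plan is to unfold the definition of a specification structure and reduce the corollary directly to Lemma~\ref{maxchain}. Recall that a specification structure is a finite, graded, partially-ordered set of dimensional properties, so after the refinement I must verify three things: (i) finiteness, (ii) that the resulting binary relation is still a partial order, and (iii) that this new poset is graded. Since the hypothesis of the corollary is phrased in terms of maximal chain lengths, Lemma~\ref{maxchain} will handle the gradedness part essentially for free, and the bulk of my work will be organizing the preservation of the first two properties.

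First, I would dispatch finiteness immediately: adding a single node (or a single edge) to a finite poset yields a finite set. Second, for the partial-order property, I would split into two cases. If we add only an edge $p_1 \lessdot p_2$ (or more generally, $p_1 \prec p_2$) and then close under transitivity, I must check that antisymmetry still holds; this follows from the hypothesis that all maximal chains have the same finite length, because a cycle introduced by the new comparison would either identify two distinct elements on a common chain or force the existence of an arbitrarily long chain, contradicting the uniform length condition. If instead we add a node together with its declared comparisons, the same argument applies to the transitive closure of the enlarged relation.

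Third, with the relation confirmed to be a partial order on a finite set, I would invoke Lemma~\ref{maxchain} in the ``all maximal chains have the same length $\Rightarrow$ graded'' direction. This delivers gradedness of the refined poset, at which point all three requirements in the definition of a specification structure are satisfied, and the corollary follows.

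The step I expect to be the main obstacle, or at least the only one deserving care, is the partial-order check when an edge is added: one has to be explicit that the newly added comparison (together with the transitive closure) cannot create a cycle, and the cleanest way to see this is precisely through the equal-length maximal chain hypothesis. Once that is in place, the rest is just an application of Lemma~\ref{maxchain} and unfolding definitions, which is why the result is properly labelled a corollary rather than a theorem.
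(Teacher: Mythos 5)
Your proposal is correct and follows essentially the same route as the paper, which treats this as a direct consequence of Lemma~\ref{maxchain}: the equal-length maximal chain hypothesis gives gradedness, and finiteness is immediate. Your extra care about antisymmetry of the transitive closure is harmless but largely folded into the statement itself, since the hypothesis already speaks of ``the resulting partially ordered set,'' i.e., the refinement is presumed to yield a poset.
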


Examples for proper (and improper) ways of adding a new node as well as examples of making minimal modifications to accommodate for an inconsistently-added node are included in the Appendix.

\begin{example}
Here, we give a very simple specification structure: lawfulness ($L$) $\prec$ no deadlock ($ND$) $\prec$ safety ($S$). We consider the consistent evaluator $W$ (presented in the proof sketch of Theorem~\ref{consist_evaluable}). 
$W$ will have the ordering $W(\{L\}) < W(\{ND\}) < W(\{S\}) < W(\{S, L\}) < W(\{S, ND\}) < W(\{S, L, ND\})$. The ordering intuitively means that a car should always prioritize taking actions that satisfy all three types of specifications. However, if there is a situation where a car cannot ensure safety without breaking the law, then it should break the law to maintain safety since $W(\{S\}) > W(\{L\})$. Also, this hierarchy says if there is a situation where the car is in a deadlock, it can break the law since $W(\{S, ND\}) > W(\{S, L\})$ as long as the action is still safe. 
\end{example}
\medskip\noindent 
As long as the car chooses behaviors that respect the weak order from the consistent evaluator on the specification structure, the system will satisfy the guarantees part of the assume-guarantee contract, and therefore perform actions that are ``correct''. We now introduce how assumptions can be defined with respect to the specification structure.
\section{Assume-guarantee profiling}
While each autonomous vehicle should only guarantee that it will behave according to a single specification structure, we want our assumptions on the environment (other agents) to accommodate for the diverse behaviors displayed by human drivers who may not follow the law all the time. This implies that other agents might choose to follow any one of a large number of possible specification structures.
   \begin{figure}
      \centering
      \includegraphics[scale=0.28]{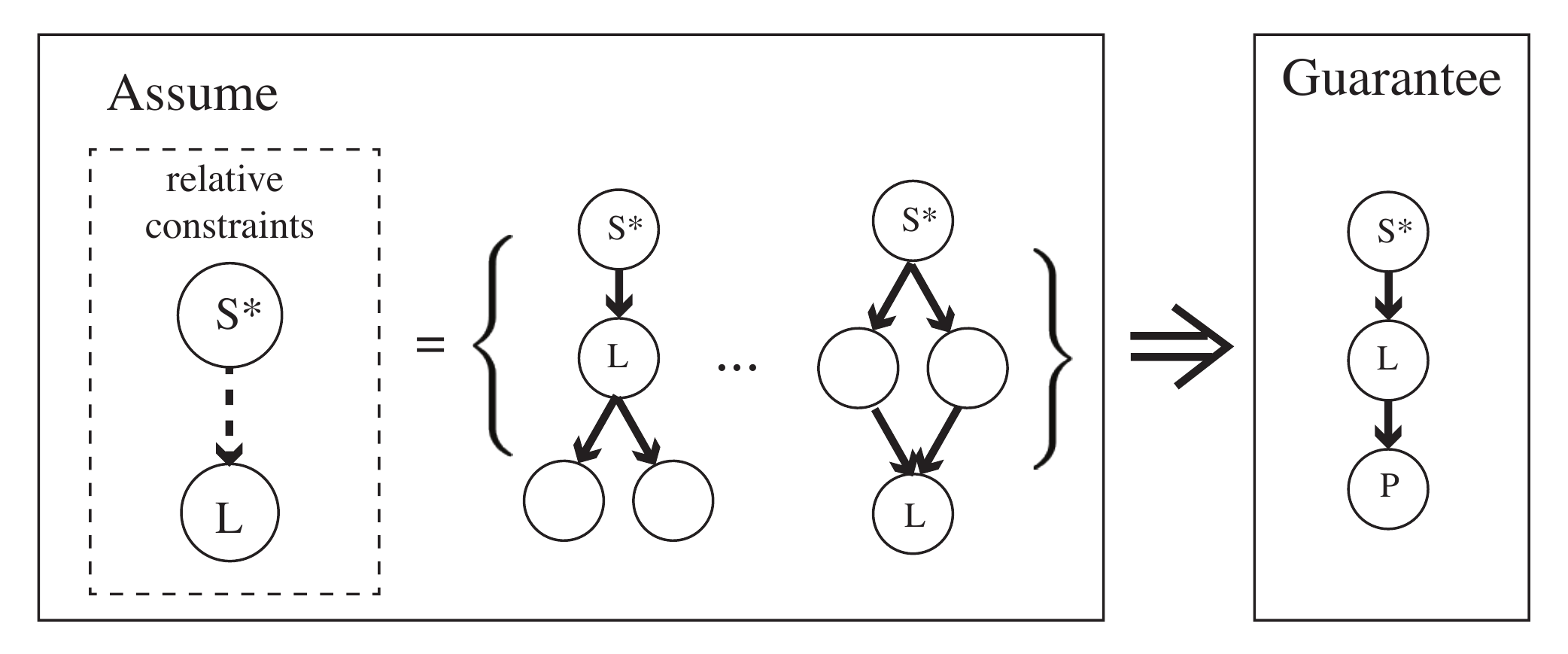}
      \caption{The assumptions are based on a set of specifications structures that satisfy some constraints, which is shown on the left. The guarantees are based on a single specification structure that is shown on the right.}
       \label{AGspecContract}
   \end{figure}
We constrain the set of specifications structures of other agents to always prioritize safety first. Since other agents presumably follow the law most of the time, we also include a relative ordering constraint where safety is prioritized before the law. We have only defined a relative ordering between safety and law in the assumptions since we do not exactly know where other dimensional properties will fit within that agent's specification structure. Therefore, our assumptions on the environment can be defined as follows: 
\begin{definition}[Assumption set]
\label{assumeSet}
Let $S$ denote the set of all specification structures. Let $P$ be a set of dimensional properties. Let $p \in P$. The assumptions in the assume-guarantee contract is defined as:
\begin{align*}
A_{\text{spec}} = \{S_i \in S|(\text{safety} \in S_i)\land (\text{lawfulness} \in S_i) \\ \land (\forall p \in S_i.p \preceq \text{safety}) \}.
\end{align*}
It is the set of all specification structures that both safety and lawfulness are included in the specification structure and that safety has the highest rank out of all dimensional properties included in the specification structure. 
\end{definition}

The following revised assume-guarantee definition of Definition~\ref{ag_profile}
characterizes the set of specification structures agents in the environment can be assumed to have and the specifications that an individual self-driving car can guarantee. 
\begin{definition}[Assume-guarantee profiling revised]
An assume-guarantee contract $\mathcal{C}$ defined for an agent is a pair $(\A,\G)$, where
\begin{enumerate}
    \item $\A$ is a set of specification structures for the agent's environment that is a subset of the set generated by Definition~\ref{assumeSet}.
    \item $\G$ is the guarantee that the agent with respect to a single, pre-defined specification structure.
\end{enumerate}
\end{definition}
This assume-guarantee profiling is shown in Fig.~\ref{AGspecContract}.
Let $\mathcal{J}$ be the index set for a set of agents. Given $\C_j = (\A_j, \G_j)$, where $j$ is the index of an agent and $\A_j$ are the assumptions that agent $j$ is making about its environment while $\G_j$ is its guarantees. We say that the group of agents indexed by $\mathcal{J}$ are \textit{compatible} if
\begin{align*}
\forall j \in \mathcal{J}.\forall i \in \mathcal{J}-\set{j}.\G_j \subseteq \A_{i}
\end{align*}
This says that the guarantees of agent $j$ must be included in the assumptions of all other agents in the compatible set. If one agent $i$ has guarantees corresponding to a specification structure that is not included in another agent $k$'s assumptions, then correct behavior cannot be guaranteed. 
Assuming that all agents' assumptions and guarantees are compatible, we can formulate the notion of a \textit{blame-worthy} action/strategy.
\begin{definition}[Blameworthy action]
A blameworthy action/strategy is one in which an agent violates its guarantees, thereby causing another agent's assumptions not to be satisfied and thus resulting in an unwanted situation where blame must be assigned.
\end{definition}
In order to show an example of an assume-guarantee contract that might be legally imposed for self-driving cars, we present a \textit{set of axioms for the road}. The specification structures defined in the assumptions and guarantees of this contract are intentionally left unrefined, since it would ultimately be up to a car-manufacturer to determine the remaining ordering of specification properties.
\begin{enumerate}
\item[A1] \textit{Other agents will not act such that collision is inevitable.}
\item[A2] \textit{Other agents will often act corresponding to traffic laws, but will not always follow them.}
\item[G1]  \textit{An agent will take no action that makes collision inevitable.} 
\item[G2]  \textit{An agent will follow traffic laws, unless following them leads to inevitable collision.} 
\item[G3]  \textit{An agent may violate the law if that can safely get it out of a dead/live-lock situation.}
\end{enumerate}

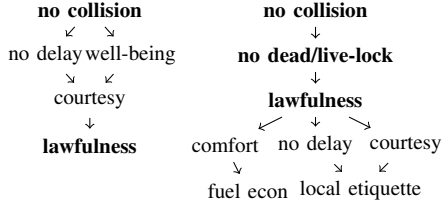
\begin{figure}
\centering
\begin{tikzpicture}[scale=.30]
  \node (safety_assume) at (-10,2) {\footnotesize$\textbf{no collision}$};
  \node (delay_assume) at (-11.8,0) {\footnotesize$\text{no delay  }$};
    \node (comfort_assume) at (-8.2,0)
    {\footnotesize$\text{well-being}$};
    \node (courtesy_assume) at (-10,-2)
    {\footnotesize$\text{courtesy}$};
\node (law_assume) at (-10,-4) {\footnotesize$\textbf{lawfulness}$};
\draw [->] (safety_assume) -- (delay_assume);
\draw [->] (delay_assume) -- (courtesy_assume);
\draw [->] (safety_assume) -- (comfort_assume);
\draw [->] (comfort_assume) -- (courtesy_assume);
\draw [->] (courtesy_assume) -- (law_assume);

  \node (safety) at (0,2) {\footnotesize$\textbf{no collision}$};
  \node (progress) at (0,0) {\footnotesize
 $\textbf{no dead/live-lock}$};
  \node (law) at (0,-2) {\footnotesize$\textbf{lawfulness}$};
  \node (comfort) at (-4,-4) {\footnotesize$\text{comfort}$};
  \node (delay) at (0,-4) {\footnotesize$\text{no delay}$}; 
    \node (courtesy) at (4,-4) {\footnotesize$\text{courtesy}$}; 
    \node (fuel) at (-3,-6) {\footnotesize$\text{fuel econ}$}; 
    \node (culture) at (2,-6) {\footnotesize$\text{local etiquette}$}; 
    
\draw [->] (safety) -- (progress);
\draw [->] (progress) -- (law);
\draw [->] (law) -- (comfort);
\draw [->] (law) -- (delay);
\draw [->] (law) -- (courtesy);
\draw [->] (comfort) -- (fuel);
\draw [->] (delay) -- (culture);
\draw [->] (courtesy) -- (culture);
\end{tikzpicture}
\caption{Two examples of refined assumption (left) and guarantee (right) specification structures. Dimensional properties of the root structures are in bold text. The left could represent an ambulance and the right could represent a civilian vehicle.}
\label{axguarantees}
\end{figure}
We can see from Fig.~\ref{axguarantees} how these axioms have a direct mapping to a specification structure. We argue that this sort of root structure might be imposed by a governing body to ensure the safe behaviors of self-driving cars. 

\section{Game Examples}
In this section, we present some preliminary examples of how these types of high-level behavioral specifications that are defined via these specification structures, might be applied in some traffic scenarios. 

Under the simplified assumption that each agent has a single specification structure (i.e. agents are not human), each agent will have a well-defined ordering of which actions have higher value, and will therefore have a well-defined utility function over actions. Game theory provides a mathematical model of strategic interaction between rational decision-makers that have known utility functions\cite{gameTheory}. We can therefore use game-theoretic concepts to analyze which pair of actions will be jointly advantageous for the agents given their specification structures. 

\begin{example}
Consider the case where there are two agents, each of whose specification structures are specified in Fig.~\ref{game1}. In this game, Player Y encounters some debris, and must choose an action. Player Y can either choose to stay in its current location, or do a passing maneuver that requires it to break the law. Player X represents a car moving in the opposite direction of Player Y. In this case, Player X can either move at its current velocity or accelerate. The move and accelerate action make Player X move one and two steps forward respectively. The $W$ function is the same as one the proof sketch of Theorem~\ref{consist_evaluable}.
   \begin{figure}[thpb]
      \centering
      \includegraphics[scale=0.38]{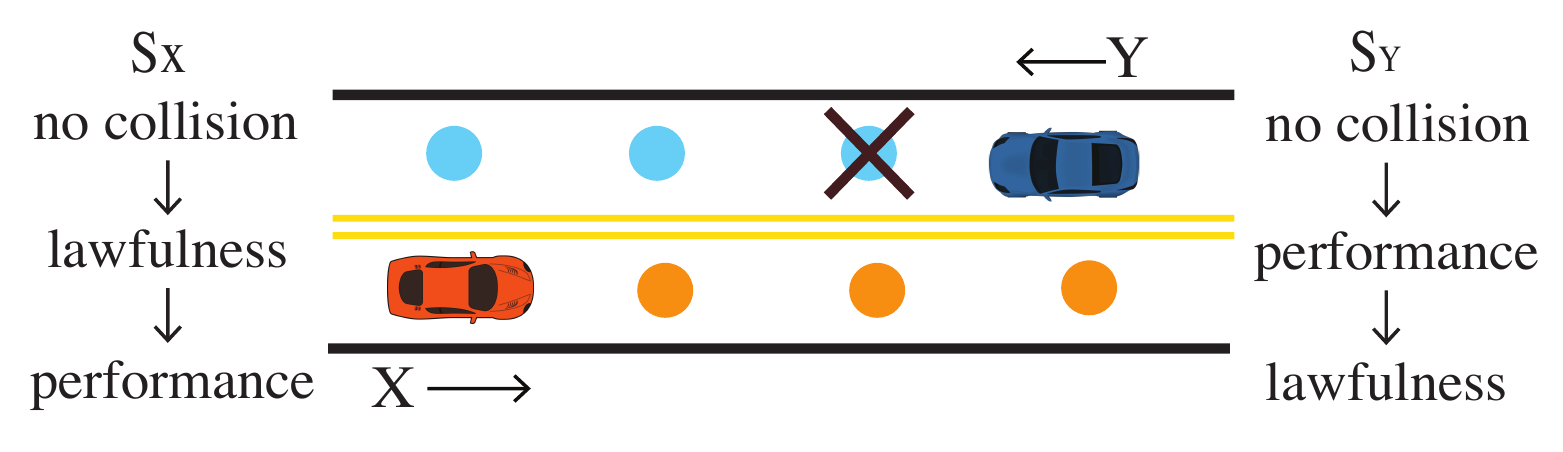}
      \caption{The game scenario when Player Y encounters debris on its side of the road. The specification structures of each of the agents are given by $S_x$ and $S_y$.}
      \label{game1}
   \end{figure}
$W_x$ is evaluated on the specification structure $S_x$ shown on the left side of Fig.~\ref{game1} and $W_y$ is evaluated on $S_y$. Assuming there is a competent oracle who gives the same predictions for both agents, the resulting payoff matrix according to the specification structures are given in Table \ref{game1strat} (note that an equivalent decimal conversion of the scores is given for ease of reading).
\begin{table}[H]
\centering
\caption{}
\label{game1strat}
\begin{tabular}{c|c|c}
playerX/playerY & Stay & Pass \\
\hline
Move & \begin{tabular}[c]{@{}c@{}}$W_x(1,1,0) \sim 6$\\ $W_y(1,0,1) \sim 3$\end{tabular} & \begin{tabular}[c]{@{}c@{}}$W_x(1,1,0) \sim 6$\\ $W_y(1,1,0) \sim 6$\end{tabular} \\
\hline
Accelerate & \begin{tabular}[c]{@{}c@{}}$W_x(1,1,1) \sim 7$\\ $W_y(1,0,1) \sim 3$\end{tabular} & \begin{tabular}[c]{@{}c@{}}$W_x(0,0,0) \sim 0$\\ $W_y(0,0,0) \sim 0$\end{tabular}
\end{tabular}
\end{table}
From the table, we can see there are two Nash equilibria in this game scenario. The two equilibria are Pareto efficient, meaning there are no other outcomes where one player could be made better off without making the other player worse off. Since there are two equilibria, there is ambiguity in determining which action each player should take in this scenario despite the fact that the specification structures are known to both players.  
\end{example}
\medskip\noindent 
There is a whole literature on equilibrium selection \cite{gameTheory}. The easiest way to resolve this particular stand-off, however, would be to either 1) communicate which action the driver will take or 2) define a convention that all self-driving cars should have when such a situation occurs. In this particular scenario, however, Player X can certainly avoid accident by choosing to maintain speed while Player Y can also avoid accident by staying. If any ``greedy'' action of either Player X or Y would pose the risk of crashing depending on the action of the other player. This suggests a risk-averse resolution in accident-sensitive scenarios like this one. We will focus on defining a more systematic way of resolving multiple Nash equilibria in future work.
\begin{example}
For this paper, we have abstracted the perception system of the self-driving car to the all-knowing oracle. We first consider the case where the oracles on each of the cars are in agreement, and then consider the potential danger when the oracles of the cars differ. In this scenario, we assume that there are two cars that are entering an intersection with some positive velocity, as shown in Fig.~\ref{game2}. 

    \begin{figure}[thpb]
      \centering
      \includegraphics[scale=0.38]{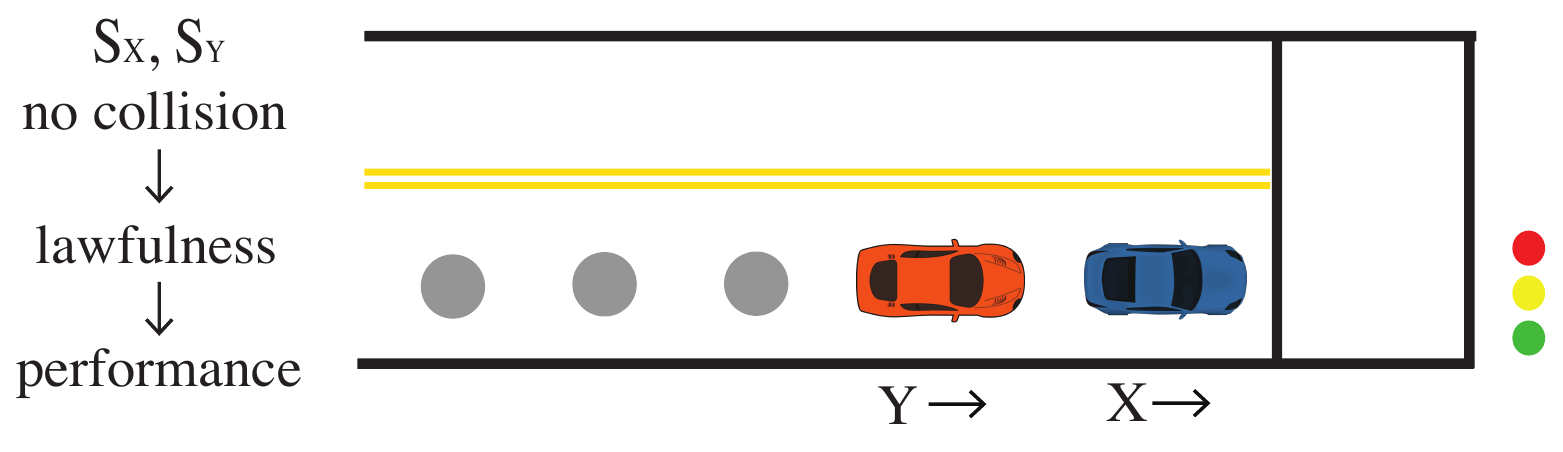}
      \caption{The game scenario where two cars are approaching an intersection, but have different beliefs about the state of the traffic light.}
      \label{game2}
   \end{figure}

In the case where both vehicles' oracles agree on the same information, i.e. that the yellow light will remain on for long enough for both vehicles to move past the intersection, the best action for both Player X and Player Y is to move forward. 

Now, consider the case where the oracles are giving incompatible beliefs about the environment, namely, the state of the traffic signal. Let X have the erroneous belief the traffic light will turn red very soon, and it assumes that Y's oracle is believes the same thing. X's oracle gives rise to Table~\ref{bad1}, according to which, the conclusion that Player X will make is that both of the cars should choose to slow down.  

\begin{table}
\centering
\caption{}
\begin{tabular}{c|c|c}
\label{bad1}
playerX/playerY & Slow & Move \\
\hline
Slow & \begin{tabular}[c]{@{}c@{}}$W_x(1,1,0) \sim 6$\\ $W_y(1,1,0) \sim 6$\end{tabular} & \begin{tabular}[c]{@{}c@{}}$W_x(0,1,0) \sim 2$\\
$W_y(0,0,0) \sim 0$\end{tabular} \\
\hline
Move & \begin{tabular}[c]{@{}c@{}} $W_x(1,0,1) \sim 3$\\ $W_y(1,1,0) \sim 6$ \end{tabular} & \begin{tabular}[c]{@{}c@{}}$W_x(1,0,1) \sim 3$\\ $W_y(1,0,1) \sim 3$\end{tabular}
\end{tabular}
\end{table}

Assume that Y has a perfect oracle that predicts the traffic light will stay yellow for long enough such that Y would also be able to make it through the intersection. If Y assumes that X has the same information (see Table~\ref{bad2}), then the best choice for both is to move forward into the intersection. 
\begin{table}[H]
\centering
\caption{}
\begin{tabular}{c|c|c}
\label{bad2}
playerX/playerY & Slow & Move \\
\hline
Slow & \begin{tabular}[c]{@{}c@{}}$W_x(1,1,0) \sim 6$\\ $W_y(1,1,0) \sim 6$\end{tabular} & \begin{tabular}[c]{@{}c@{}}$W_x(0,1,0) \sim 2$\\ $W_y(0,0,0) \sim 0$\end{tabular} \\
\hline
Move & \begin{tabular}[c]{@{}c@{}}$W_x(1,1,1) \sim 7$\\ $W_y(1,1,0) \sim 6$\end{tabular} & \begin{tabular}[c]{@{}c@{}}$W_x(1,1,1) \sim 7$\\ $W_y(1,1,1) \sim 7$\end{tabular}
\end{tabular}
\end{table}
The incompatible perception information will thus cause Player X to stop and Player Y to move forward, ultimately leading to collision. 

This particular collision is caused by errors in the perception system. Future work will need to focus on developing a better perception system or on creating a system that will yield correct behaviors even perception uncertainty. 
\end{example}
\section{Conclusion and Future Work}
To summarize, we have introduced a framework that allows us to formulate specifications that govern high-level behaviors of autonomous vehicles. If specifications are hierarchically ordered into a specification structure, actions and strategies can be compared to one another in a consistent manner. Furthermore, the specification structure can be made more complete by properly defining new properties and relations. We introduce the idea of having assume-guarantee contracts defined over these specification structures that serve as profiles for implicit agreement. A contract essentially says if the environment can be assumed to behave according to some softly-constrained specification structure, then the self-driving car can guarantee it will behave according to its own specification structure. Blame is defined as the case where a car does not act according to its assumed specification structure. Finally, we provide some examples of how cars following these specification structures might behave in game-theoretic experiment settings. 

In the future, we hope to extend the game-theoretic framework that deals with two cars that know each other's specification structures to a case that deals with human-driven car with a more ambiguous specification structure and a self-driving car with a known specification structure. Integrating perception uncertainty in this formulation would make it more applicable to real-life settings. Another interesting direction is to investigate sufficient and/or necessary conditions for stronger guarantees (such as robust safety) under the profiling framework by combining tools and techniques from formal methods (e.g., inductive invariants) and control theory (e.g., Lyapunov functions).


\bibliographystyle{plain}
\bibliography{refs}
\nocite{*}
\newpage
\clearpage
\newpage

\newpage
\clearpage
\newpage
\section*{Appendix}
\subsection{Examples of Consistent Evaluators}
\begin{figure}[H]
    \centering
   \includegraphics[scale=0.45]{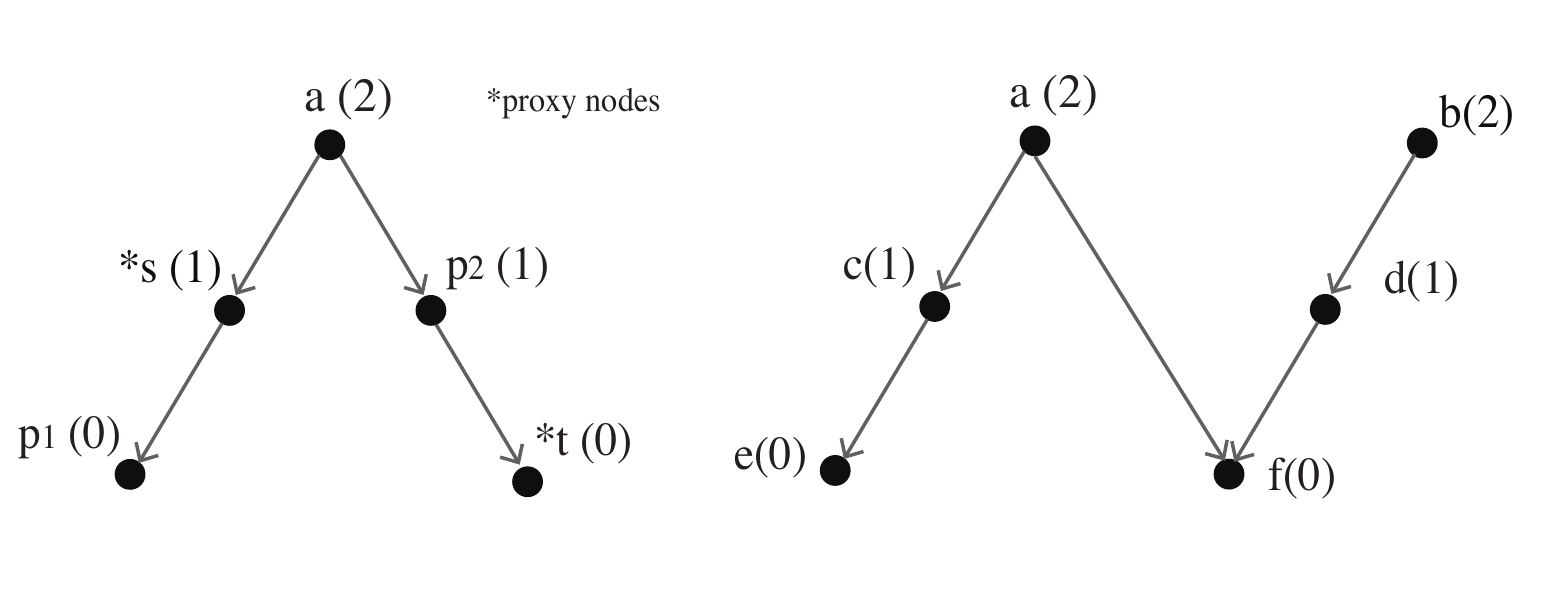}
    \caption{Both posets admit consistent evaluators. The value the consistent evaluator assigns on each singleton that consists of the node is written in parentheses. Note that the poset on the left has the additional graded property while the one on the right does not.}
    \label{consistent}
\end{figure}

Consider the posets in Fig.~\ref{consistent}. Any evaluator $f$ that assigns the values according to the values in the parentheses, shown in the figure, can easy be verified to have properties~\ref{worstbest}-\ref{evalorder} of consistent evaluation. On the left poset, we can also see that property~\ref{evalcaution} is also satisfied since for every pair of nodes such that $f(\set{p_1}) < f(\set{p_2})$ implies there exist proxy nodes $s$ and $t$ such that $f(\set{p_1}) = f(\set{t})$, $f(\set{p_2}) = f(\set{s})$ and $p_1 \prec s$ and $p_2 \prec t$. This relation can be easily seen for the nodes $p_1$ and $p_2$ in Fig.~\ref{consistent}. The same statement applies to the poset on the right, which is not graded, but also happens to be consistently evaluable!

\subsection{Proof of Theorem \ref{consist_evaluable}}
\begin{proof}
($\Rightarrow$):
Suppose that $P$ is a poset of dimensional properties with the 
ordering relation $\preceq$ such that $P$ has a consistent 
evaluator $f$. Since $P$ is finite, the set $f_P \coloneqq 
\set{f(\set{p}) \mid p \in P}$ is also finite. Furthermore, the
range of $f$ being totally ordered implies we can write $f_P = 
\set{z_1, z_2, \ldots, z_n}$ for $n = \abs{f_P}$ such that $z_1
< z_2 < \ldots < z_n$. For each $z_i \in f_P$, let $f^{-1}(z_i)
\subseteq P$ be defined by $f^{-1}(z_i) \coloneqq \set{p \mid p
\in P \land f(\set{p}) = z_i}$. Observe that requirement~\ref{evalorder} of Definition~\ref{evaluator} implies that for each $i$, $f^{-1}(z_i)$ is an antichain. Consequently, the $f^{-1}(z_i)$'s form a partition of $P$ by antichains. By ranking each $f^{-1}(z_i)$ by the corresponding $z_i$, it also follows that the antichains respect the partial order defined by $\preceq$. To show maximality, suppose that there exists $j \in [n]$ such that $f^{-1}(z_j)$ is not a maximal antichain. This implies that there exists $k \in [n] - \set{j}$ and there is a property $q^{\star} \in f^{-1}(z_k)$ such that $\set{q^{\star}} \cup f^{-1}(z_j)$ is an antichain. WLOG, suppose $j < k$ so that $z_j < z_k$ implies $\forall q \in f^{-1}(z_j). f(q) < f(q^{\star})$. But the existence of any $\tilde{q} \in f^{-1}(z_j)$ such that $f(\tilde{q}) < f(q^{\star})$ implies, by requirement~\ref{evalcaution} of Definition~\ref{evaluator}, that there exists $q' \in f^{-1}(z_j)$ such that $q' \prec q^{\star}$. But this contradicts the assumption that $\set{q^{\star}} \cup f^{-1}(z_j)$ is an antichain. From this, we conclude that \ref{rank_prop}) holds. To see that \ref{path_prop}) holds, observe that any property $p \in f^{-1}(z_j)$, if $j \neq n \land n \geq 2$, then by requirement~\ref{evalcaution} and the antichain property of the $f^{-1}(z_i)$, there exists $q \in f^{-1}(z_{j+1})$ such that $p \prec q$. Similarly, if $j \neq 1 \land n \geq 2$, there exists $r \in f^{-1}(z_{j-1})$ such that $r \prec p$. Applying this argument to $r$ and/or $q$ inductively yields a chain of length $n$ that contains $q$. This chain is maximal by the contradiction resulting from applying the pigeonhole principle to the assignment of properties from any chain of greater length to the maximal antichains.
\\
($\Leftarrow$): 
Let the maximal antichains in the partition of $P$ be $P_0,P_1,\ldots,P_{m-1}$ with ranks $r(P_0) < r(P_1) < \ldots < r(P_{m-1})$. 
We construct a function $W:2^P \to \mathbb{N}^{m}$ as follows. For any subset $S \subseteq P$, we define $A_{S,r}$ to be the set of all elements in the subset $S$ with rank $r$.
$W(S) \in \mathbb{N}^{m}$ the $(m)$-tuple whose $i$\/th element, or digit, $W_i(S) \coloneqq \abs{A_{S,i}}$. This means that the $i$\/th element in the tuple is the number of elements in the subset $S$ with rank $i$. 

The $i$\/th digit of $W(S)$ is defined to be more significant than the $j$\/th digit if the former is associated with a higher rank. This induces a natural total ordering relation $\leq$ on the set $\set{W(P) \mid P \subseteq \mathcal{P}}$ by most significant digits. In particular, this means, $W(S_a) \leq W(S_b)$ if and only if all corresponding entries are equal or the first most significant differing pair satisfies $W_i(S_a) < W_i(S_b)$. The rest of the proof involves checking that $W$ has all the properties of a consistent evaluator. 

We can easily verify that requirements \ref{worstbest}-\ref{evalorder} of a consistent evaluator are satisfied. Now, we show that requirement \ref{evalcaution} holds as well. Let us show this by contradiction. Consider there exists a node $p_1$ and $p_2$ such that $f(\{p_1\}) < f(\{p_2\})$ but there does not exist a node $s$ or $t$ such that $p_1 \prec s$, $t \prec p_2$, and  $f(\{p_2\}) = f(\{s\})$ and $f(\{p_1\}) = f(\{t\})$. WLOG, consider $p_1$ to be a node where there does not exist a node $s$ such that $p_1 \prec s$ and $f(\{p_2\}) = f(\{s\})$. Since there is no node that is directly comparable to $p_1$ in the antichain with value equal to $f(\{p_2\})$, there exists a maximal chain containing $p_1$ that has length strictly less than $m$. This is a violation of property \ref{path_prop}) characterizing the poset $P$.
\end{proof}

\subsection{Proof of Theorem~\ref{uniqueness}}
\begin{proof}
Suppose that $P_1, P_2, \ldots, P_m$ is also a partition of maximal antichains of $P$ with ranks $r(P_1) < r(P_2) < \ldots < r(P_m)$ that respect the partial order. Suppose that $m \neq n$ where $n = \abs{f_P}$. If $m > n$, then by \ref{path_prop}) of Theorem~\ref{consist_evaluable} there is a chain of length $m$. However, assigning these $m$ properties to the $f^{-1}(z_i)$ means by the pigeonhole principle that there are at least two properties that are assigned to the same $f^{-1}(z_j)$ for some $j$, implying that $f^{-1}(z_j)$ is not an antichain, a contradiction. It follows that $m \leq n$. Similarly, we can argue that $m \geq n$ and therefore $m = n$. Now, we claim that $P_i = f^{-1}(z_i)$ for all $i \in \set{1,2,\ldots, n}$. Suppose this is not the case, then there exists $p \in P_k$ such that $p \in f^{-1}(z_h)$ for $k \neq h$. Then by \ref{path_prop}), there are two chains of length $n$: $p_1 \prec p_2 \prec \ldots \prec p_n$ and $f_1 \prec f_2 \prec \ldots \prec f_n$ such that $p_i \in P_i$ and $f_i \in f^{-1}(z_i)$. We also have $p_k = f_h = p$. WLOG, assume $h < k$. This implies that $p_1 \prec p_2 \prec \ldots \prec p_k = p = f_h \prec f_{h+1} \prec \ldots \prec f_{n}$. However, this chain has length $k + n - h > n$ since $k > h$. This contradicts the fact that $P$ can be partitioned into $n$ antichains.
\end{proof}
\subsection{Proof of Theorem \ref{consist}}
\begin{proof}
By symmetry, it is sufficient to prove the 
($\Rightarrow$) direction. Suppose $f_a(P_1) 
\leq f_a(P_2)$. Now since $\mathcal{P}$ is consistently evaluable, by Theorems~\ref{consist_evaluable} and ~\ref{uniqueness}, it can be partitioned by a unique set of maximal antichains $\set{A_r}_{r=1}^{R}$. By 
requirement~\ref{evalorder} of Definition~\ref{evaluator}, one can show that any consistent evaluator will rank these antichains the same way. Namely, any consistent evaluator $f$ of $\mathcal{P}$ can be assumed, WLOG, to satisfy the following conditions
\begin{enumerate}
    \item\label{less} $f(\set{p_1}) < f(\set{p_2}) < \ldots < f(\set{p_R})$, for $p_r \in A_r, r \in \set{1,\ldots, R}$
\item\label{equal} $f(\set{q_i}) = f(\set{r_i})$, for any $q_r, r_r \in A_r, r \in \set{1,\ldots, R}$
\end{enumerate}
Condition~\ref{equal} above implies that all pairs of nodes that are of equal value to $f_a$ are also of equal value to $f_b$ and vice versa. So by requirement~$\ref{evalremove}$ of Definition~\ref{evaluator}, we can assume that $P_1$ and $P_2$ do not overlap in property values due to either $f_a$ or $f_b$. If $P_1 = \varnothing$, then by requirement~\ref{worstbest}, we have $f_b(P_1) < f_b(P_2)$. Otherwise, by requirement~\ref{evaldiscrim}, let $p_i^{\star} \in P_i$ be the property that maximizes the value of $f$ on $P_i$, we have $f_a(P_1) \leq f_a(P_2)$ implies that $p_1^{\star} \prec p_2^{\star}$ since $f_a(\set{p_1^{\star}}) \neq f_a(\set{p_2^{\star}})$ due to $P_1$ and $P_2$ not overlapping in property values and therefore $f_b(P_1) < f_b(P_2)$.
\end{proof}
\subsection{Adding Nodes to a Specification Structure}
Since car-manufacturers will want to refine a given root specification structure, it is important to define the ways nodes (or dimensional properties) can be added and still preserve the graded property of the specification structure. In Fig.~\ref{addingNodes}, we see some simple examples of how nodes can be added and the maximal chain property defined in Lemma \ref{maxchain} is maintained. 
    \begin{figure}[thpb]
      \centering
      \includegraphics[scale=0.26]{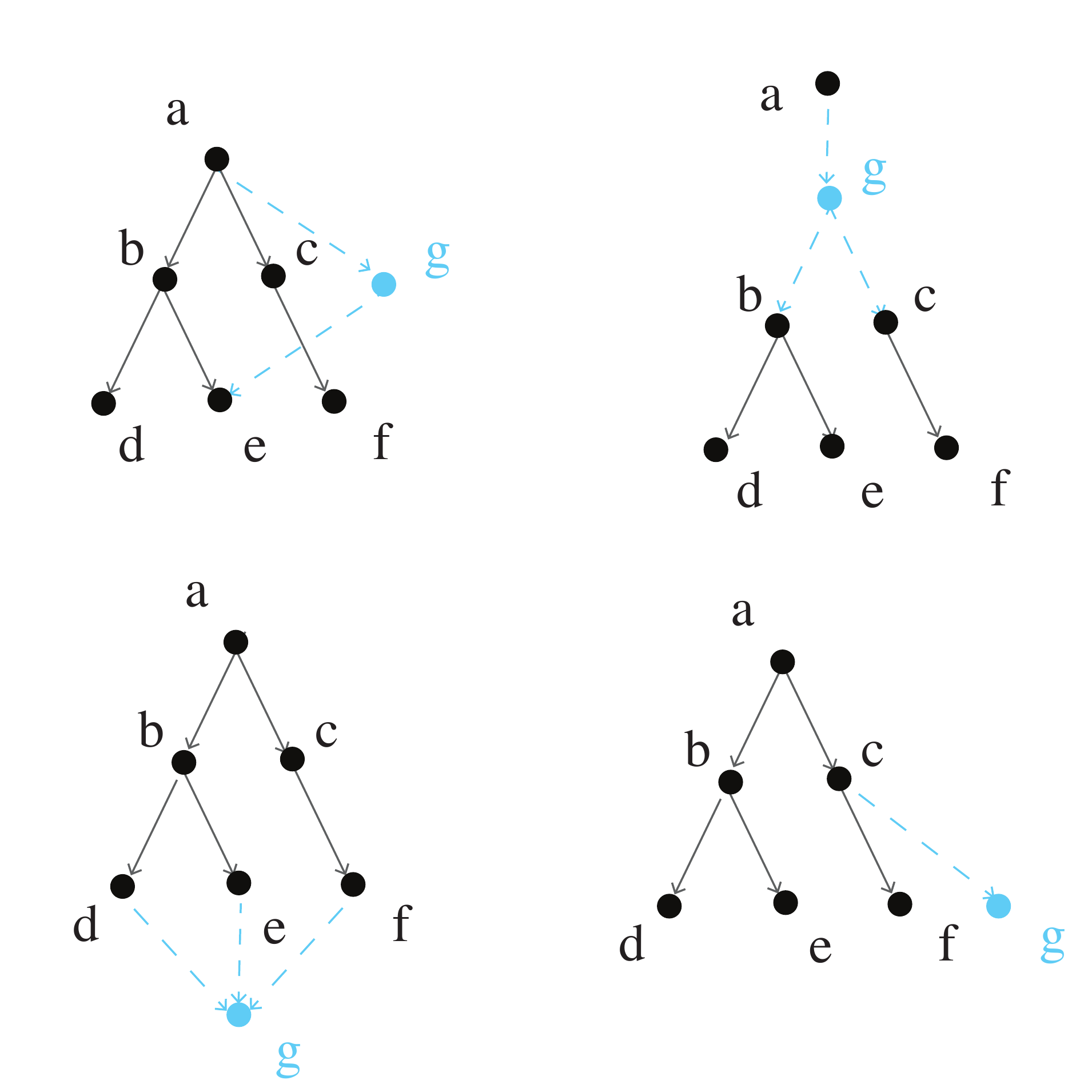}
      \caption{Shows different ways a dimensional property can be added to a graded poset and still preserve the graded property.}
      \label{addingNodes}
   \end{figure}
When a node is added to an anti-chain with a given rank $r$, it must be valued less than at least one node with rank $r+1$ and greater than at least one node of rank $r-1$ if such nodes exit. This is clearly displayed in the top-left example and bottom-right example in Fig.~\ref{addingNodes}. When a specification is added in-between existing rankings to create a new ranking, the node must be compared to all nodes in the rank above and the rank below in a manner that is consistent with the existing partial order. This can be seen in the top right and left bottom examples in Fig.~\ref{addingNodes}. 
Oftentimes comparisons of a new node  with existing nodes in a specification structure will result in a poset that is no longer graded. When the resulting poset is no longer graded, we introduce a way to make minimal modifications to the poset such that it regains its graded property. We mean the modifications are minimal in the sense that they do not significantly redefine the existing relationships between nodes.
    \begin{figure}[thpb]
      \centering
      \includegraphics[scale=0.26]{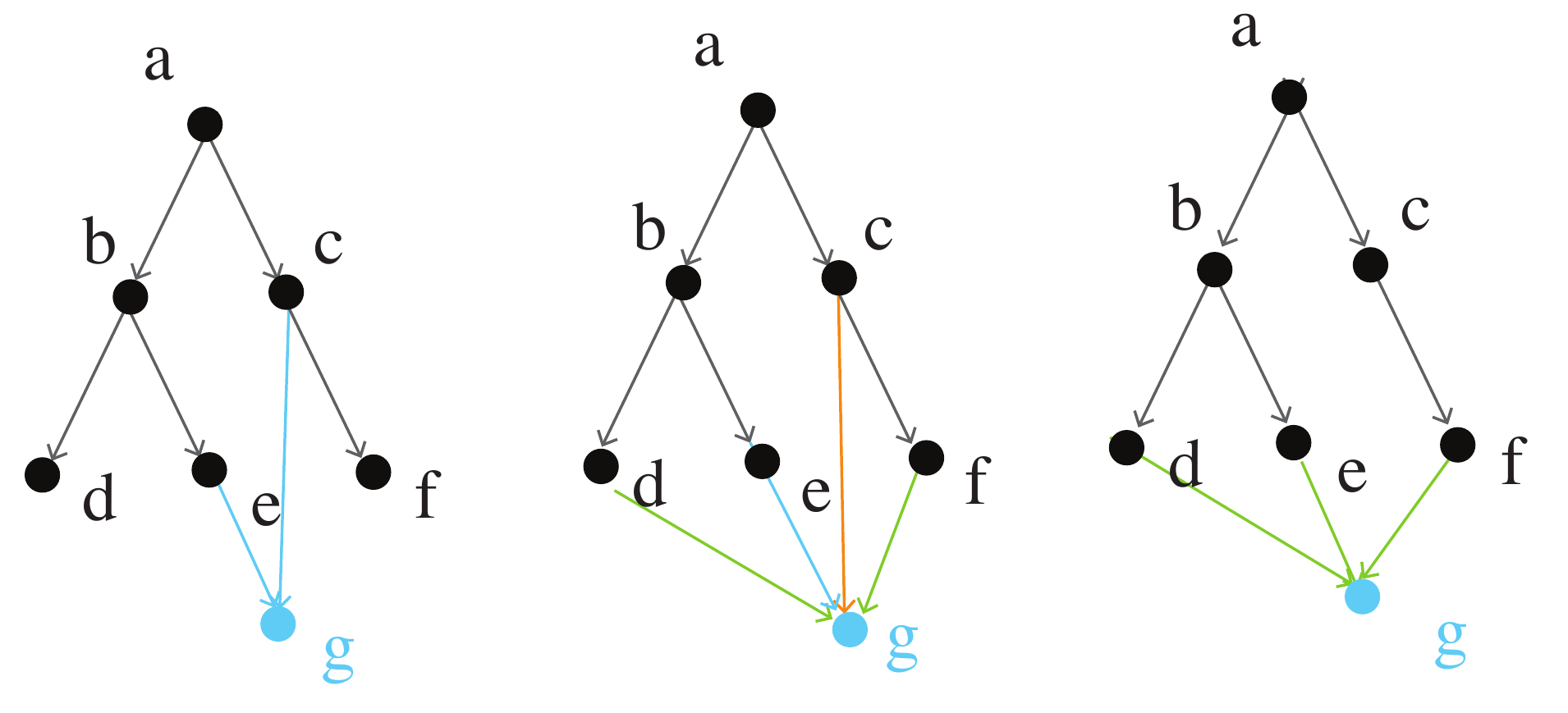}
      \caption{This shows the steps taken to add the node $g$ such that $g < e$ and $g < c$. The orange edges are deleted and the green edges are added to minimally change the poset to a graded poset.}
      \label{addingBadNodes}
   \end{figure}
In the particular scenario where a node is added such that it has a lower value than two nodes of ranks with a difference of one as shown in Fig.~\ref{addingBadNodes}, it is best to preserve the edge with the node in the poset with smaller rank, remove the edge of the node with higher rank and redefine that comparison via a proxy node. We can see it can be burdensome to exhaustively define all the different ways a node could be compared to existing nodes in a graded poset. Here are some general guidelines to follow when trying to add or remove edges to regain the graded property of the poset: 
\begin{enumerate}
    \item If an edge is redundant (i.e. the comparison is already defined via another node) then remove it. 
    \item Add edges between incomparable nodes of the poset. 
\end{enumerate}

\subsection{Adding Edges to a Specification Structure}
The same guidelines for resolving improperly added nodes applies to edges as well. Note that the only incomparable nodes are ones of the same rank. 

    \begin{figure}[thpb]
      \centering
      \includegraphics[scale=0.33]{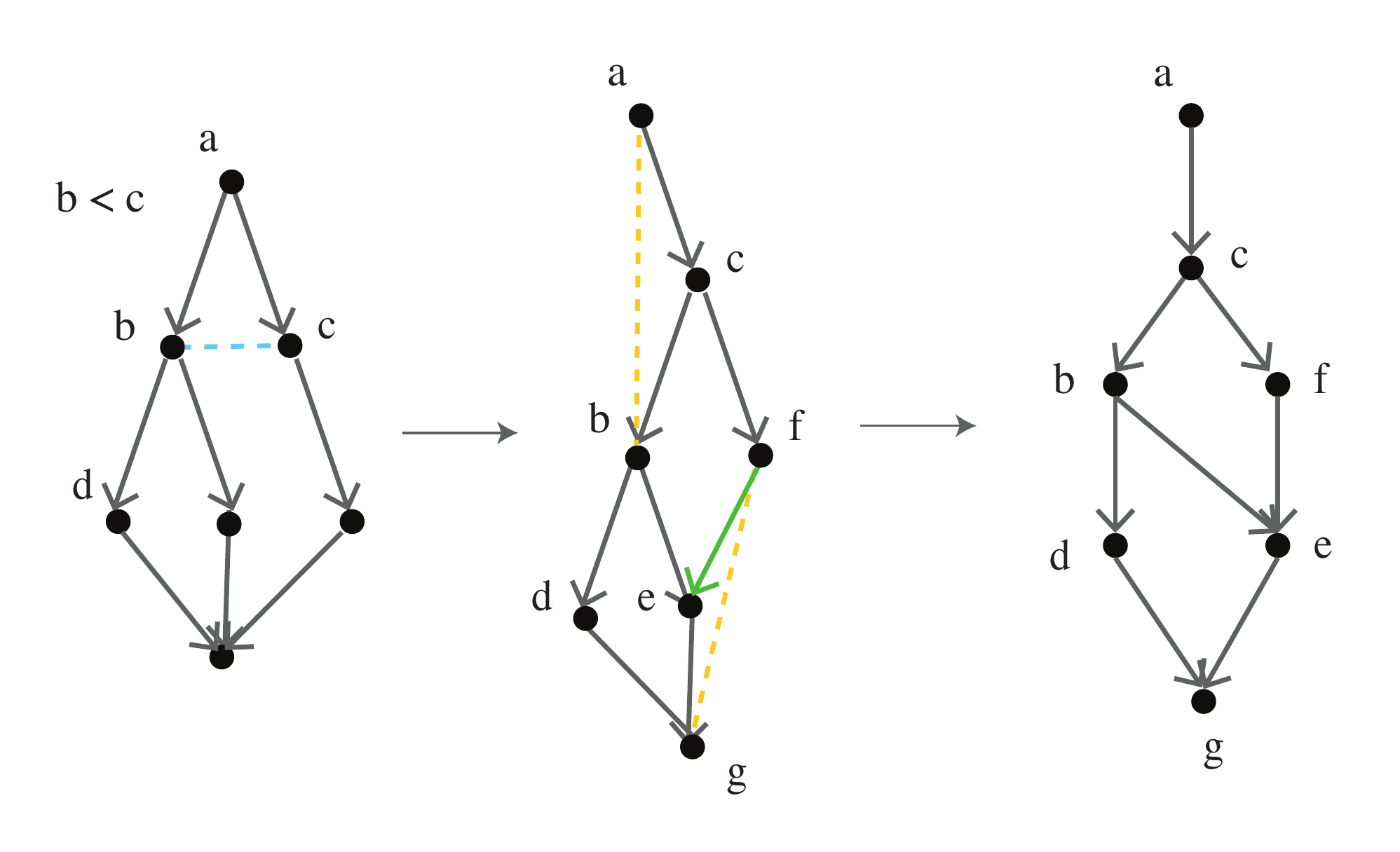}
      \caption{This shows the sequence of steps to add in the edge $b < c$ into the existing poset on the left. Orange edges are removed and green edges are added.}
      \label{addingEdges}
   \end{figure}
The example shown in Fig.~\ref{addingEdges} shows how adding a comparison between the nodes $b$ and $c$ causes the set to no longer have the graded property. By adding edges between incomparable nodes and also removing redundant edges, the gradedness of the poset returns.

\end{document}